\newtheorem{theorem}{Theorem}
\newtheorem{lemma}[theorem]{Lemma}
\newtheorem{corollary}[theorem]{Corollary}
\newtheorem{claim}[theorem]{Claim}
\newtheorem{definition}{Definition}
\newcommand{\pr}{\ensuremath{\operatorname{\mathbf{P}}}}
\newcommand{\E}{\ensuremath{\operatorname{\mathbf{E}}}}
\newcommand{\card}[1]{\ensuremath{\left| #1 \right|}} 
\newcommand{\set}[1]{\ensuremath{\left\{#1\right\}}} 
\newcommand{\floor}[1]{\ensuremath{\lfloor\, #1\,\rfloor}}
\newcommand{\ora}{\ensuremath{\mathcal O}}
\newcommand{\po}{\ensuremath{\mathcal{P}}}
\newcommand{\dom}{\ensuremath{\succ}}
\newcommand{\ndom}{\ensuremath{\nsucc}}
\newcommand{\incomp}{\ensuremath{\not\sim}}
\newcommand{\height}{\ensuremath{\operatorname{height}}}
\newcommand{\C}{\ensuremath{\mathcal C}}
\newcommand{\wid}{\ensuremath{w}}
\newcommand{\domtr}{\ensuremath{\unrhd}}
\newcommand{\A}{\ensuremath{\mathcal A}}
\newcommand{\B}{\ensuremath{\mathcal B}}
\newcommand{\cost}{\ensuremath{\operatorname{cost}}}
\newcommand{\argmin}{\ensuremath{\operatorname*{argmin}}}
\newcommand{\defic}{\ensuremath{\operatorname{def}}}
\newcommand{\DEF}{\ensuremath{\mbox{{\sc def}}}}
\newlength{\pgmtab}  
\newenvironment{program}{%
\begin{tabbing}\hspace{0em}\=\hspace{0em}\=%
\hspace{\pgmtab}\=\hspace{\pgmtab}\=\hspace{\pgmtab}\=\hspace{\pgmtab}\=%
\hspace{\pgmtab}\=\hspace{\pgmtab}\=\hspace{\pgmtab}\=\hspace{\pgmtab}\=%
\+\+\kill}{\end{tabbing}}
\newcommand {\ELSE}{{\bf else\ }}
\newcommand {\IF}{{\bf if\ }}
\newcommand {\THEN}{\mbox{\bf then\ }}
\newcommand {\RETURN}{\mbox{\bf return\ }}
\newcommand{\nposets}{\ensuremath{N_{\wid}(n)}}  
\newcommand{\nposetsAlt}{\ensuremath{N_{2\wid}(n)}}
\newcommand{\chainmerge}{\textsc{ChainMerge}}
\newcommand{\peeling}{\textsc{Peeling}}
\begin{document}
\allowdisplaybreaks[1]
\title{Sorting and Selection in Posets}
\author{Constantinos Daskalakis\footnote{Computer Science, U.C. Berkeley.{\tt email:costis@cs.berkeley.edu}} \and Richard M. Karp\footnote{Computer Science, U.C. Berkeley.{\tt email:karp@icsi.berkeley.edu}} \and
Elchanan Mossel\footnote{Statistics, U.C. Berkeley. {\tt email:mossel@stat.berkeley.edu}. Supported by NSF grants DMS 0528488 and DMS 0548249 (CAREER), ONR and N0014-07-1-05-06 and a Sloan Fellowship in Mathematics} \and Samantha Riesenfeld\footnote{Computer Science, U.C. Berkeley. {\tt email:samr@EECS.Berkeley.EDU}}\and Elad Verbin\footnote{Computer Science, Tel Aviv University. {\tt email:eladv@post.tau.ac.il}} }
\date{July 10, 2007}

\addtocounter{page}{-1}
\maketitle
\begin{abstract}
Classical problems of sorting and  searching assume an underlying
linear ordering of the objects being compared.
In this paper, we study a more general setting,
in which some pairs of objects are incomparable.
This generalization is relevant in
applications related to rankings in sports,
college admissions, or conference submissions. It also
has potential applications in biology, such as comparing the
evolutionary fitness of different strains of bacteria, or understanding
input-output relations among a set of metabolic  reactions or the causal
influences among a set of interacting genes or proteins.
Our results improve and extend results from two decades ago of Faigle and Tur\'{a}n,
who were the first to consider some of the problems considered here.

A poset is defined as a set of elements with a transitive partial
order where some pairs of elements may be incomparable. A measure of
complexity of a poset is given by its {\em width}, which is
the maximum size of a set of mutually incomparable elements. We
consider algorithms that obtain information about a poset by queries
that compare two elements. We consider  two complexity
measures:  query complexity, which counts only the number of
queries, and total complexity, which counts all operations.

We present an algorithm that sorts a width $\wid$ poset of size $n$
and has query complexity $O(n (\wid + \log n))$, which is within a
constant factor of the information-theoretic lower bound.
We also show that a variant of Mergesort has query
complexity $O(\wid n \log \frac{n}{\wid})$ and total complexity
$O(\wid^2 n \log \frac{n}{\wid})$.  Faigle and Tur\'{a}n have shown that
the sorting problem has query complexity
$O\left(\wid n \log \frac{n}{\wid}\right)$ but did not address the total compexity of the problem.

Two problems related to sorting are the problem of finding the minimal elements in a
poset and its generalization of finding the bottom $k$ ``levels'',
called the {\em $k$-selection} problem. We give
efficient deterministic and randomized algorithms for finding the minimal
elements with $O(\wid n)$ query and total complexity. We provide
matching lower bounds for the query complexity up to a factor of $2$
and generalize the results to the $k$-selection problem.
We also derive upper bounds on the total complexity of some other
problems of a similar flavor,
such as computing a linear extension of a poset and computing the heights of all elements.

Many open problems remain, of which the most significant is to
determine the precise total complexity of sorting, as well as the precise query and
total complexity of $k$-selection.
It would also be interesting to find efficient static and dynamic data structures
that play the same role for partial orders that heaps and binary search trees
play for total orders.
\end{abstract}
\thispagestyle{empty}
\newpage

\section{Introduction}
\label{sec:intro}

Sorting is the process of determining the underlying linear ordering
of a set $S$ of $n$ elements. {\it Comparison algorithms}, in which
direct comparisons between pairs of elements of $S$ are the only
means of acquiring information about the linear ordering, form an
important subclass, including such familiar algorithms as Heapsort,
Quicksort, Mergesort, Shellsort and Bubblesort.

In this paper we extend the theory of comparison sorting to the
case where the underlying structure of the set $S$ is a partial
order, in which an element may be larger than, smaller than, or incomparable to
another element, and the ``larger-than'' relation is transitive and irreflexive.
Such a set is called a partially ordered set, or poset.
This extension is applicable to many ranking problems where
certain pairs of elements are incomparable.
Examples include ranking college applicants, conference submissions,
tennis players,
strains of bacteria according to their evolutionary fitness,
and points in $R^d$ under the coordinate-wise dominance
relation.

Our algorithms gather information by {\it queries} to an oracle.
The oracle's response to a query involving elements $x$ and $y$ is either
the relation between $x$ and $y$ or a statement of their incomparability.
In many applications, a query may involve extensive effort (for
example, running an experiment to determine the relative
evolutionary fitness of two strains of bacteria, or comparing the
credentials of two candidates for nomination to a learned society).
We therefore consider two measures of complexity for an algorithm or problem: the {\it query complexity}, which is the number of queries performed, and the {\it total complexity}, which is the number of computational operations of all types performed (basic operations include standard data structure operations involving one or two elements of the poset).

A partial order on a set can be thought of as the reachability
relation of a directed acyclic graph (DAG). More generally, a
transitive relation (which is not necessarily irreflexive) can be
thought of as the reachability relation of a general directed graph.
In applications, the relation
represents the direct and indirect causal influences among a set of
variables, processes, or components of a system.
We show that with negligible overhead,
the problem of sorting a
transitive relation reduces to the problem of sorting a partial
order.
Our algorithms thus allow one to reconstruct general directed graphs, given an oracle for queries on reachability from one node to another. As directed graphs are the basic model for many real-life networks including social, information, biological and technological networks (see~\cite{Newman03} for a survey), our algorithms provide a potential tool for the reconstruction of such networks.

There is a vast literature on algorithms for determining properties of an
initially unknown total order by means of comparisons. Partial orders often
arise in these studies as a representation of the ``information state'' at a
general step of such an algorithm. In such cases the incomparability of two
elements simply means that their true relation has not been determined yet.
The present work is quite different, in that the underlying structure to be
discovered is a partial order, and incomparability of elements is inherent,
rather than representing temporary lack of information. Nevertheless, the body
of work on comparison algorithms for total orders provides valuable tools and
insights that can be extended to the present context (e.g. \cite{Brightwell, Fredman, Kahn&Kim, Knuth, Linial}).

The model considered here was previously considered by
Faigle and Tur\'{a}n \cite{Faigle&Turan}, who presented
two algorithms for the problem of sorting a partial ordered set,
which they term ``identification'' of a poset.
We formally describe their results in Section~\ref{sec:sorting-kselection}.
A recent paper ~\cite{OnakParys} considers an
extension  of the searching and sorting problem to partial orders
that are either trees or forests.

\subsection{Definitions}

To precisely describe the problems considered in this paper and our
results, we require some formal definitions.  A partially ordered
set, or poset, is a pair $\po = (P, \dom)$, where $P$ is a set of
elements and $\dom\; \subset P \times P$ is an irreflexive,
transitive binary relation. For elements $a,b \in P$, if $(a,b)
\in\; \dom$, we write $a \dom b$ and we say that $a$
\emph{dominates} $b$, or that $b$ is \emph{smaller than} $a$. If
$a\not \dom b$ and $b \not \dom a$, we say that $a$ and $b$ are
\emph{incomparable} and write $a \incomp b$.

A \emph{chain} $C\subseteq P$ is a subset of mutually comparable
elements, that is, a subset such that for any elements $c_i, c_j \in
C$, $i\neq j$, either $c_i \dom c_j$ or $c_j \dom c_i$.
An \emph{ideal} $I\subseteq P$ is a subset of elements such that if $x \in I$ and $x \dom y$, then
$y \in I$.  The {\em height} of an element $a$ is the maximum cardinality of a chain
whose elements are all dominated by $a$.  We call the set
$\set{a\,:\,\forall\, b, \ b \dom a \text{ or } b \incomp a} $ of
elements of height 0 the \emph{minimal} elements. An
\emph{anti-chain} $A \subseteq P$ is a subset of mutually
incomparable elements. The {\it width} $\wid(\po)$ of poset $\po$ is
defined to be the maximum cardinality of an anti-chain of $\po$.

A \emph{decomposition} $\C$ of $\po$ into chains is a family
$\C=\set{C_1, C_2, \ldots, C_q}$ of disjoint chains such that their
union is $P$. The \emph{size} of a decomposition is the number of
chains in it. The width $\wid(\po)$ is clearly a lower bound on the
size of any decomposition of $\po$. We make frequent use of {\it
Dilworth's Theorem}, which states that there is a decomposition of
$\po$ of size $\wid(\po)$.  A decomposition of size $\wid(\po)$ is
called a \emph{minimum chain decomposition}.

\subsection{Sorting and $k$-selection}
\label{sec:sorting-kselection}
The central computational problems of this paper are \emph{sorting}
and \emph{$k$-selection}.  The sorting problem is to completely
determine the partial order on a set of $n$ elements, and the $k$-selection
problem is to determine the set of elements of height at most $k-1$, i.e., the
set of elements in the $k$ bottom levels of the partial order. In both problems
we are given an upper bound of $\wid$ on the width of the partial order.

In the absence of a bound on the width,  the  query complexity of the sorting
problem is exactly $n \choose 2$, in view of the worst-case example in which
all pairs of elements are incomparable.
In the classical sorting and selection problems,  $\wid = 1$.  Our interest is
mainly
in the case where
$\wid \ll n$,
since this assumption is natural in many of the
applications. Furthermore, if $\wid$ is of the same order as $n$,
then it is easy to see that the complexity of sorting is of order
$n^2$, as in the case where no restrictions are imposed on the
poset.

Faigle and Tur\'{a}n \cite{Faigle&Turan} have described two algorithms for sorting
posets, both of which have query complexity $O\left(\wid n \log \frac{n}{\wid}\right)$.
(In fact the second algorithm is shown to have query complexity
$O(n\log N_\po)$, where $N_\po$ is the number of ideals in input poset $\po$.
It is easy to see that $N_{\po} = O(n^{\wid})$ if $\po$ has width $\wid$, and that $N_{\po} = (n/\wid)^{\wid}$ if $\po$ consists of $\wid$ incomparable chains, each of size $n/\wid$.)  The total complexity
of sorting posets has not been considered.
However, the total complexity of the first algorithm given by Faigle and Tur\'{a}n depends
on the subroutine for computing a chain decomposition (the complexity of which is not analyzed in~\cite{Faigle&Turan}). It is not clear if there exists a polynomial-time implementation of the second algorithm.

\subsection{Techniques}

It is natural to approach the problems of sorting and $k$-selection
in posets by considering generalizations of the
well-known algorithms for the case of total orders,
whose running times are closely matched by proven lower bounds.
Somewhat surprisingly, natural generalizations of the classic algorithms
{\em do not} provide optimal poset algorithms
in terms of total and query complexity.

In the case of sorting, the
generalization of Mergesort considered here loses a
factor of $\wid$ in its total complexity compared to the
information-theoretic lower bound.
Interestingly, one can achieve the information-theoretic lower bound
on query complexity (up
to a constant factor) by carefully exploiting the structure of the
poset.  We do not know whether it is possible to achieve the
information-theoretic bound on total complexity.

The seemingly easier problem of $k$-selection still poses
some challenges. In particular, nontrivial arguments
are needed to obtain both lower and upper bounds. Moreover, there is a gap
of factor $2$ between the lower and upper bound, even for the problem of
finding minimal elements.

\subsection{Main Results and Paper Outline}

In Section~\ref{sec:chainmerge}, we briefly discuss
an efficient representation of a poset. The representation is of size
$O(\omega n)$, and it allows the relation between any two elements to be retrieved in time $O(1)$.

In Sections~\ref{sec:optimal} and~\ref{sec:efficient}, we prove the following main theorems:
\begin{theorem}
There exists an algorithm for sorting a poset of width at most $\wid$ over $n$
elements with optimal query complexity $O(n (log n + \wid))$.
\end{theorem}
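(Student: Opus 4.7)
The plan is to give an incremental insertion algorithm that processes the $n$ elements one at a time while maintaining the ChainMerge data structure of Section~\ref{sec:chainmerge}; that structure keeps a chain decomposition $\C = \{C_1,\ldots,C_{\omega'}\}$ with $\omega' \le \omega$ and, for each inserted element and each other chain, pointers to the largest dominated and smallest dominating representatives in that chain. I would show that the $i$-th insertion uses $O(\log i + \omega)$ queries, which telescopes to the claimed $O(n(\log n + \omega))$.

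Inserting a new element $x$ amounts to locating, for each chain $C_j$, the two positions that split $C_j$ into the prefix that dominates $x$, the contiguous middle incomparable to $x$, and the suffix dominated by $x$ (contiguity following from transitivity). The naive strategy of running $\omega$ independent binary searches costs $\Theta(\omega \log n)$ per insertion and reproduces the Mergesort-style bound. The key saving is that whenever a query against $y\in C_j$ returns a \emph{comparable} answer, the ChainMerge pointers stored at $y$ instantly certify the relation of $x$ to one representative in every other chain, collapsing many unresolved windows at once. The algorithm should therefore interleave the $\omega$ searches and, after each comparable answer, propagate the newly-certified bounds to shrink the remaining windows before any further queries are issued.

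The query bound is then obtained by splitting the answers into \emph{comparable} ones --- of which at most a constant number per chain are needed to nail the two chain boundaries, giving $O(\omega)$ overall --- and \emph{incomparable} ones, which behave like bisection steps on the combined unresolved portion across all chains. Charging the incomparable answers to a single binary search over the at most $i$ currently inserted elements yields the $O(\log i)$ term. The main obstacle is proving that the interleaving really does make each incomparable answer halve a \emph{combined} search space rather than only the window of the chain in which it was issued: this requires a careful global ordering of probes (for example, always querying into the chain whose current unresolved window is largest) and a matching potential-function argument showing that the amortized drop per incomparable query is $\Omega(1)$ of the total unresolved mass. A secondary, largely routine, issue is to update ChainMerge under insertion --- including a possible Dilworth-style re-chaining when $x$ would push the decomposition beyond $\omega$ chains --- within the same query budget; this piece follows from the maintenance procedures established for ChainMerge in Section~\ref{sec:chainmerge}.
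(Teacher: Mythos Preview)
Your central accounting argument has a genuine gap, and the obstacle you flag is not merely technical but fatal to the scheme as stated. Take the poset that is a disjoint union of $\wid$ chains of length $n/\wid$ and insert an element $x$ belonging to chain $C_1$. Every query inside $C_1$ returns a \emph{comparable} answer, and $\Theta(\log(n/\wid))$ of them are needed to locate $x$ there, so comparable answers are not $O(1)$ per chain. Every query into $C_j$ with $j\ne 1$ returns \emph{incomparable}, and $2(\wid-1)$ such queries are needed, so incomparable answers are not $O(\log i)$. In this example \chainmerge\ propagation is vacuous (no element of $C_1$ dominates anything outside $C_1$), so a comparable answer collapses no window in any other chain. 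More generally, the potential you suggest---total unresolved mass across chains, or equivalently the product of per-chain windows---has log-size $\Theta(\wid\log(n/\wid))$, and a single query halves only the window of the chain in which it is issued; a bit-per-query argument against this potential therefore recovers only the $O(\wid\log(n/\wid))$ bound of independent per-chain binary searches. The per-chain windows ignore the cross-chain constraints that make the true set of insertion outcomes small, and the ``largest-window-first'' heuristic does nothing to exploit those constraints.

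The paper's algorithm, \textsc{EntropySort}, rests on a different idea. It keeps the same insertion skeleton but replaces each per-chain binary search by a \emph{weighted} binary search in which position $j$ of chain $C_i$ receives weight ${\cal D}_{ij}$, the number of width-$\wid$ extensions of the current subposet on all remaining elements that are consistent with that outcome. A weighted-search lemma then bounds one search by $2(1+\log({\cal D}_i/{\cal D}_{ij^*}))$; these logarithms telescope across the at most $2\wid$ searches of a single insertion to $4\wid+2\log(Z_k/Z_{k+1})$, and then across all $n$ insertions to $4\wid n + 2\log \nposets = O(n\log n+\wid n)$. The extension counts are exactly what encode the cross-chain dependence your potential misses; this is also why the algorithm is not claimed to be time-efficient. (A minor secondary point: Section~\ref{sec:chainmerge} only \emph{builds} \chainmerge\ from a given decomposition and contains no insertion or re-chaining procedure; in the paper the decomposition is simply recomputed from the already-known subposet at each step, at zero query cost.)
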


\begin{theorem}
A generalization of Mergesort for sorting a poset of width at most $\wid$ over $n$
elements has query complexity $O(\wid n \log n)$ and total complexity
$O(\wid^2 n\log n)$. The algorithm also provides a minimum chain decomposition of
the set.
\end{theorem}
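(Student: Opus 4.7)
The plan is to generalize the recursive structure of Mergesort. Partition the $n$ elements into two halves $S_1, S_2$, recursively sort each to obtain minimum chain decompositions $\C_1$ of $S_1$ and $\C_2$ of $S_2$, and then merge them into a minimum chain decomposition of $S_1 \cup S_2$. The base case is a single element, which forms one chain of length $1$.

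The crux is the merge subroutine. By the recursive invariant, $|\C_1|, |\C_2| \le \wid$. I would process the chains pair by pair: for every $(C_1^{(i)}, C_2^{(j)}) \in \C_1 \times \C_2$, a two-finger sweep determines all cross-relations. The key observation is that for a fixed $x \in C_1^{(i)}$, the elements of $C_2^{(j)}$ that dominate $x$ form a prefix of $C_2^{(j)}$ and those dominated by $x$ form a suffix, and both cut-points shift monotonically as $x$ descends $C_1^{(i)}$. This lets the pair be handled with $O(|C_1^{(i)}|+|C_2^{(j)}|)$ queries. Summing over the at most $\wid^2$ pairs,
\[
\sum_{i,j}\bigl(|C_1^{(i)}|+|C_2^{(j)}|\bigr) \;=\; |\C_2|\,|S_1| + |\C_1|\,|S_2| \;\le\; \wid\, n,
\]
so the recurrence $Q(n) = 2\,Q(n/2) + O(\wid n)$ yields $Q(n) = O(\wid n \log n)$. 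For total complexity, each query also triggers an $O(\wid)$-cost update to the \chainmerge{} structure of Section~\ref{sec:chainmerge}, giving $T(n) = 2\,T(n/2) + O(\wid^2 n) = O(\wid^2 n \log n)$.

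It remains to produce a \emph{minimum} chain decomposition of $S_1\cup S_2$: naively pooling $\C_1$ and $\C_2$ yields up to $2\wid$ chains, while Dilworth's theorem guarantees $\wid$ suffice. I plan to consolidate via the standard Dilworth/K\"onig encoding, viewing a chain decomposition as a matching in the bipartite graph $G=(L,R,E)$ with $L=R=S_1\cup S_2$ and $(u,v)\in E$ iff $u\dom v$. The pooled decomposition corresponds to a matching of size $|S_1\cup S_2|-2\wid$, and a minimum decomposition to a maximum matching of size $|S_1\cup S_2|-\wid$, so at most $\wid$ augmenting paths need to be found. Because the \chainmerge{} structure answers each domination query in $O(1)$, locating and processing these augmentations fits inside the $O(\wid^2 n)$ per-merge operation budget, and the resulting decomposition feeds the next level of recursion with the $|\C|\le\wid$ invariant intact.

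The main obstacle is this consolidation step. The two-finger merge, the recurrence analysis, and the matching-to-decomposition correspondence are routine once set up; what requires genuine care is showing that the $\le\wid$ augmenting paths can actually be found cheaply enough --- a naive BFS per augmentation, bounded by the $O(n^2)$ size of the bipartite edge set, would exceed our budget, so we must exploit the chain structure (for instance, searching over chain endpoints rather than the full transitive graph) to keep each augmentation at $O(\wid n)$ work. Correctness then follows from Dilworth's theorem applied to the combined poset, and the stated complexity bounds follow immediately from the recurrences above.
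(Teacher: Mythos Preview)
Your proposal is correct and follows essentially the same approach as the paper: recursive halving, pooling the two recursive decompositions into at most $2\wid$ chains, and then consolidating down to $\wid$ chains via the bipartite-matching/augmenting-path view of Dilworth's theorem, with the cross-chain relations computed by pairwise linear sweeps (this is exactly the paper's \chainmerge\ construction). The paper's \textsc{Peeling} algorithm is precisely the concrete realization of the consolidation step you sketch --- it works over the current top elements of the $\le 2\wid$ chains rather than the full comparability graph, so each of the $\le \wid$ augmentations costs $O(\wid n)$, exactly as you anticipate; the $O(\wid^2 n)$ per-merge cost thus comes from these augmentations (not from per-query \chainmerge\ updates, which are $O(1)$ lookups once the structure is built in $O(\wid n)$ time).
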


In Section~\ref{sec:selection},
we consider the $k$-selection problem of determining the elements of height
less than or
equal to $k-1$. We give upper and lower bounds on the query complexity and
total
complexity of $k$-selection within deterministic and randomized models of
computation.  For the case $k=1$ (finding the minimal elements),
we show
that the query complexity and total complexity are $\Theta(\wid n)$. The
query upper bounds match the query lower bounds up to a factor of $2$.

In Section~\ref{appx:karpsthing}, we give a randomized algorithm, based on a generalization of Quicksort, of
expected total complexity $O(n(\log n +\wid))$ for computing a linear extension of a poset.  We
also give a randomized algorithm of expected total complexity $O(\wid n \log n)$ for computing the heights of all elements in a poset.

Finally, in Section~\ref{sec:posetvariants}, we show that the results on sorting
posets generalize to the case when an upper bound on the width is not known
and to the case of transitive relations.

\subsection{Acknowledgments:}
E.M. would like to thank Mike Saks for the reference to the work of
Faigle and Tur\'{a}n~\cite{Faigle&Turan}.

\section{Representing a poset: the {\sc ChainMerge} data structure}
\label{sec:chainmerge}

Once the relation between every pair of elements in a poset has been
determined, some representation of this information is required,
both for output and for use in our algorithms.  The simple
$\chainmerge$ data structure that we describe here supports
constant-time look-ups of the relation between any pair of elements.
It is built from a chain decomposition of the poset.

Let $\C=\set{C_1, \ldots C_q}$ be a chain decomposition of a poset
$\po = (P,\dom)$.  $\chainmerge(\po, \C)$ stores, for each element
$x\in P$, $q$ indices as follows: Let $C_i$ be the chain of $\C$
containing $x$. The data structure stores the index of $x$ in $C_i$
and, for all $j$, $1 \leq j \leq q$, $j \neq i$, the index of the
largest element of chain $C_j$ that is dominated by $x$.
The performance of the data structure is
characterized by the following lemma.
\begin{claim}
\label{thm:chainmerge} Given a query oracle for a poset
$\po=(P,\dom)$ and a decomposition $\C$ of $\po$ into $q$ chains,
building the \chainmerge\ data structure has query complexity at
most $2qn$ and total complexity $O(qn)$, where $n=\card{P}$. Given
$\chainmerge(\po, \C)$, the relation in $\po$ of any pair of
elements can be found in constant time.
\end{claim}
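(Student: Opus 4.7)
The plan is to build \chainmerge\ by a pairwise merge of the chains of $\C$. The key tool is a two-pointer procedure that determines all relations between two chains using a number of queries linear in their combined size, exploiting transitivity of $\dom$. Summing this bound over all $\binom{q}{2}$ chain pairs yields the overall query complexity, and the constant-time lookup falls out of what gets stored.

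Fix two chains $C_i = (u_1 \dom \cdots \dom u_a)$ and $C_j = (v_1 \dom \cdots \dom v_b)$, and for each $u_k$ let $L(k)$ be the smallest $\ell$ with $u_k \dom v_\ell$, or $b+1$ if no such $\ell$ exists. By transitivity, $u_k \dom v_{L(k)}$ together with $u_{k-1} \dom u_k$ forces $u_{k-1} \dom v_{L(k)}$, so $L$ is weakly non-decreasing in $k$. I therefore maintain a pointer $\ell$ initialized to $1$, iterate $k$ from $1$ to $a$, and while $\ell \leq b$ and the oracle on $(u_k, v_\ell)$ returns anything other than $\dom$, advance $\ell$; then set $L(k) := \ell$ and move to $k+1$. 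Each query either increments $\ell$ (which can occur at most $b$ times across the whole pass) or finalizes $L(k)$ for some $k$ (at most $a$ times), so this pass uses at most $a+b$ queries. A symmetric pass with the roles of $C_i$ and $C_j$ swapped records for each $v \in C_j$ the largest element of $C_i$ dominated by it in another $a+b$ queries.

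Summing over unordered pairs $\{i,j\}$, each $|C_i|$ appears in $q-1$ pairs, giving at most $2\sum_{\{i,j\}} (|C_i|+|C_j|) = 2(q-1)n < 2qn$ queries. Since the algorithm does $O(1)$ bookkeeping per query and fills $q$ indices per element, the total complexity is also $O(qn)$.

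For constant-time lookup, store with each element $x$ its chain index $c(x)$ and its position within its own chain. On query $(x, y)$: if $c(x) = c(y)$, compare within-chain positions directly; otherwise, retrieve the stored value $L$ giving the position in $C_{c(y)}$ of the largest element dominated by $x$, and conclude $x \dom y$ iff $y$'s position in its chain is at least $L$. Test $y \dom x$ symmetrically, and if both tests fail, return $x \incomp y$. The one step requiring care is the two-pointer argument and the monotonicity of $L$; everything else is routine bookkeeping.
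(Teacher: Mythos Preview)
Your proof is correct and follows essentially the same approach as the paper: a simultaneous two-pointer scan of each pair of chains using $O(|C_i|+|C_j|)$ queries per directed pass, summed over all chain pairs to get at most $2(q-1)n \le 2qn$ queries, followed by the same constant-time lookup via stored cross-chain indices. The only difference is that you spell out the monotonicity argument for the pointer $L(k)$ and the query accounting explicitly, whereas the paper leaves these as one sentence.
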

\begin{proof}
The indices corresponding to chain $C_j$ that must be stored for the
elements in chain $C_i$ can be found in $O(\card{C_i}+\card{C_j})$
time, using $\card{C_i} + \card{C_j}$ queries, by simultaneously
scanning $C_i$ and $C_j$. Since each chain is scanned $2q-1$ times,
it follows that the query complexity of $\chainmerge(\po,\C)$ is at
most $2qn$, and the total complexity is $O(q \cdot \sum_{i=1}^{q}
\card{C_i})=O(qn)$.

Let $x,y\in P$, with $x\in C_i$ and $y\in C_j$.  The look-up
operation works as follows:  If $i=j$, we simply do a comparison on
the indices of $x$ and $y$ in $C_i$, as in the case of a total
order. If $i\neq j$, then we look up the index of the largest
element of $C_j$ that is dominated by $x$; this index is greater than
(or equal to) the index of $y$ in $C_j$ if and only if $x\dom y$. If
$x \not \dom y$, then we look up the index of the largest element of
$C_i$ that is dominated $y$; this index is greater than (or equal
to) the index of $x$ in $C_i$ if and only if $y \dom x$. If neither
$x\dom y$ nor $y\dom x$, then $x \incomp y$.
\end{proof}

\section{The sorting problem} \label{sec:sorting}

We address the problem of {\em sorting a poset}, which is the
computational task of producing a representation of a poset
$\po=(P,\dom)$, given the set $P$ of $n$ elements, an upper bound of $\wid$ on the
width of $\po$, and access to an oracle for $\po$. (See Section~\ref{sec:unknownwidth}
for a discussion of the case when an upper bound on the width is not known.)
An information-theoretic lower bound on the query complexity of sorting is
implied by the following theorem of Brightwell and Goodall
\cite{Brightwell&Goodall:SortingLowerBound}, which provides a lower bound
on the number $\nposets$ of posets of width at most $\wid$ on $n$
elements.

\begin{theorem}
\label{thm:info_sorting_lower_bound} The number $\nposets$
of partially ordered sets of $n$ elements and width at most $\wid$
satisfies
$$ \frac{n!}{\wid !}~4^{n(\wid-1)}~n^{-24\wid(\wid-1)} \le \nposets \le  n!~4^{n(\wid-1)}~n^{-(\wid-2)(\wid-1)/2} \wid^{\wid(\wid-1)/2}.$$
\end{theorem}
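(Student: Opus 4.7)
The plan is to count posets of width at most $\wid$ by encoding each one through a minimum chain decomposition, which exists by Dilworth's Theorem. Given a poset $\po$ of width at most $\wid$, decompose $P$ into chains $C_1,\ldots,C_\wid$ (padding with empty chains if needed). Such an encoding consists of (i) a labeled composition of $[n]$ into the chains, (ii) the linear order within each chain, and (iii) the cross-relations between each pair of chains. Both bounds follow by counting the admissible encodings of each type carefully and then correcting for overcounting and for the constraints imposed by transitivity.

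For the upper bound, the data in (i) and (ii) together specify an ordered placement of $n$ labeled elements into $\wid$ ordered sequences and contribute, after summing over chain sizes $n_1,\ldots,n_\wid$, a factor of roughly $n!$ times a multinomial-coefficient term. The cross-relations in (iii) between chains $C_i,C_j$ of sizes $n_i,n_j$ are encoded by a pair of nondecreasing staircase functions: one recording, for each $x\in C_i$, the index of the largest element of $C_j$ dominated by $x$, and its dual recording the largest element of $C_j$ dominating $x$. The number of such pairs is at most $\binom{n_i+n_j}{n_i}^2\le 4^{n_i+n_j}$; multiplied over the $\binom{\wid}{2}$ chain-pairs and using that every chain is paired with $\wid-1$ others, this yields the factor $4^{n(\wid-1)}$. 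The Stirling-type polynomial corrections $n^{-(\wid-2)(\wid-1)/2}\,\wid^{\wid(\wid-1)/2}$ then come out of a careful comparison of central binomial coefficients with their $4^{n_i+n_j}$ upper bound after summing over the chain-size composition.

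For the lower bound, I would exhibit an explicit family of distinct posets of the target cardinality. Fix $\wid$ chains of nearly equal size $n/\wid$, assign the $n$ labels to chains and within-chain positions in $n!/\prod n_i!$ ways, and then for each ordered pair of chains independently pick a staircase from a large family of size close to $\binom{n_i+n_j}{n_i}$. One designs the family of staircases so that transitivity is automatically satisfied — for example, by a layered construction forbidding incompatible ``crossing'' relations across triples of chains — and so that different choices give different labeled posets. Quotienting by the $\wid!$ relabelings of chains that preserve the poset accounts for the $1/\wid!$ prefactor. The product of central binomial coefficients across the $\binom{\wid}{2}$ chain-pairs, estimated via $\binom{2n/\wid}{n/\wid}\sim 4^{n/\wid}/\mathrm{poly}(n)$, produces the $4^{n(\wid-1)}$ term, with the $n^{-24\wid(\wid-1)}$ loss absorbing all polynomial factors from Stirling estimates and from the injectivity argument.

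The main obstacle in both directions is the interaction of pairwise chain relations through transitivity: naively counting pairs of staircases independently overcounts because not every combination is globally transitive, while a lower-bound construction must impose enough rigidity to guarantee transitivity without sacrificing too many choices. Balancing these two effects is precisely what forces the delicate polynomial correction factors $n^{-\Theta(\wid^2)}$ and $\wid^{\Theta(\wid^2)}$ in the statement, and handling this interaction — in particular, verifying injectivity of the encoding in the lower bound — is the step I expect to be the hardest.
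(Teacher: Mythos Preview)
The paper does not prove this theorem at all: it is quoted verbatim from Brightwell and Goodall~\cite{Brightwell&Goodall:SortingLowerBound} and used only as a black box to derive the information-theoretic lower bound $\log\nposets=\Theta(n\log n+\wid n)$. There is therefore no ``paper's own proof'' to compare your proposal against.

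As for the proposal itself, the overall architecture---encode via a Dilworth chain decomposition, count within-chain orders by $n!$-type factors, and count cross-chain relations by pairs of monotone staircase functions contributing $\binom{n_i+n_j}{n_i}^2\le 4^{n_i+n_j}$ per chain pair---is exactly the natural route and is essentially the one Brightwell and Goodall take. But what you have written is a plan, not a proof: the precise polynomial correction factors $n^{-(\wid-2)(\wid-1)/2}\wid^{\wid(\wid-1)/2}$ and $n^{-24\wid(\wid-1)}$ do not fall out of the sketch you give, and obtaining them requires real work. In particular, for the lower bound you assert the existence of a ``layered construction forbidding incompatible crossing relations across triples of chains'' that simultaneously guarantees global transitivity and retains nearly $\binom{2n/\wid}{n/\wid}$ choices per chain pair; you have not described such a construction, and this is the genuine content of the theorem. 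Your own closing sentence correctly identifies this as the hard step---but a proof has to carry it out, not just name it.
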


\noindent It follows that, for $\wid = o\left(\frac{n}{\log n}\right)$,
\begin{align*}
\log{\nposets} = \Theta(n \log n + \wid n). \label{eq:info_sorting_lower_bound}
\end{align*}

\subsection{An optimal sorting algorithm} \label{sec:optimal}
In this section, we describe a sorting algorithm that has optimal
query complexity, i.e. it sorts a poset of width at most $\wid$ on
$n$ elements using $\Theta(n \log n + \wid n)$ oracle queries.
Our algorithm is not necessarily computationally efficient, so in
Section~\ref{sec:efficient}, we consider efficient solutions to the
problem.

Before presenting our algorithm, it is worth discussing an intuitive approach that is different from the one we take.  For any set of oracle queries and responses, there is a corresponding set of posets, which we call \emph{candidates}, that are the posets consistent with the responses to these queries.
A natural sorting algorithm is to find a sequence of oracle queries such that, for each query (or for a positive fraction of the queries), the possible responses to it partition the space of posets that are candidates after the previous queries into three parts, at least two of which are relatively large.  Such an algorithm would achieve the information-theoretic lower bound (up to a constant).

For example, the effectiveness of Quicksort for sorting total orders
relies on the fact that most of the queries made by the algorithm partition the space of candidate total orders into two parts, each of relative size of at least $1/4$.
Indeed, in the case of total orders, much more is known:  for any subset of queries, there is a query that partitions the space of candidate total orders, i.e. linear extensions, into two parts, each of relative size of at least $3/11$ \cite{Kahn&Saks}.

In the case of width-$\wid$ posets, however, it could potentially be the case that most queries partition the space into three parts, one of which is much larger than the other two.
For example, if the set consists of $\wid$ incomparable chains, each of size $n/\wid$, then a random query has a response of incomparability with probability about $1-1/\wid$.  (On an intuitive level, this explains the extra factor of $\wid$ in the query complexity of our version of Mergesort, given in Section~\ref{sec:efficient}.) Hence, we resort to more elaborate sorting strategies.

Our optimal algorithm builds upon a basic algorithm that we call {\sc
Poset--BinInsertionSort}, which is identical to ``Algorithm A''
of Faigle and Tur\'{a}n~\cite{Faigle&Turan}.
The algorithm is inspired by the binary insertion-sort algorithm for total orders.  Pseudocode for {\sc Poset--BinInsertionSort}
is presented in Figure \ref{fig:poset-insertionsort-pseudo}.
The natural idea behind
{\sc Poset--BinInsertionSort} is to sequentially insert elements
into a subset of the poset, while maintaining a chain decomposition
of the latter into a number of chains equal to the width $\wid$ of
the poset to be constructed. A straightforward implementation of
this idea is to perform a binary search on every chain of the
decomposition in order to figure out the relationship of the element
being inserted with every element of that chain and, ultimately,
with all the elements of the current poset. It turns out
that this simple algorithm is not optimal; it is off by a factor of
$w$ from the optimum. In
the rest of this section, we show how to adapt {\sc
Poset--BinInsertionSort} to achieve the information-theoretic lower bound.

\begin{figure} \centering
\fbox{\begin{minipage}{6in}
\begin{program}
{\bf Algorithm} $\textsc{Poset--BinInsertionSort}(\po)$\\
~{\bf input:} a set $P$, a query oracle for a poset $\po=(P, \dom)$, and upper bound of $\wid$ on width of $\po$ \\
~{\bf output:} a $\chainmerge$ data structure for $\po$\\
\\
        \> {\bf 1.} ~$\po' := (\{e\},\{\})$, where $e \in P$ is some arbitrary element;~~~~/* $\po'$ is the current poset*/\\
        \> {\bf 2.} ~$P' := \{e\}$; ${\cal R}' := \{\}$;\\
        \> {\bf 3.} ~$U := P \setminus \{e\}$; ~~~~/* $U$ is the set of elements that have not been inserted */\\
        \> {\bf 4.} ~{\bf while} $U \neq \emptyset$\\
        \> \> ~~~{\bf a.} pick an arbitrary element $e \in U$; ~~~~/* $e$ is the element that will be inserted in $\po'$*/\\
        \> \> ~~~{\bf b.} $U := U \setminus \{e\}$;\\
        \> \> ~~~{\bf c.} find a chain decomposition $\C=\set{C_1,C_2, \ldots, C_q}$ of $\po'$, with $q \le \wid$ chains;\\
        \> \> ~~~{\bf d.} {\bf for} $i=1,\ldots, q$\\
        \> \> \> ~~~~~~{\bf ~~i.} let $C_i=\{e_{i1},\ldots,e_{i\ell_i}\}$, where $e_{i\ell_i} \dom \ldots \dom e_{i2} \dom e_{i1}$;\\
        \> \> \> ~~~~~~{\bf ~ii.} do binary search on $C_i$ to find smallest element (if any) that dominates $e$;\\
        \> \> \> ~~~~~~{\bf iii.} do binary search on $C_i$ to find largest element (if any) that is dominated by $e$;\\
        \> \> ~~~{\bf e.} based on results of binary searches, infer all relations of $e$ with elements of $P'$;\\
        \> \> ~~~{\bf f.} add into ${\cal R}'$ all the relations of $e$ with the elements of $P'$; $P':=P' \cup \{e\};$\\
        \> \> ~~~{\bf g.} $\po'=(P',{\cal R}')$;\\
        \> {\bf 5.} find a decomposition $\C$ of $\po'$; build $\chainmerge(\po', \C)$ (no additional queries needed);\\
        \> {\bf 6.} \RETURN $\chainmerge(\po', \C)$;
\end{program}
\caption{pseudo-code for {\sc Poset--BinInsertionSort}
\label{fig:poset-insertionsort-pseudo}}
\end{minipage}
}
\end{figure}

We begin by analyzing {\sc Poset--BinInsertionSort}.
\begin{lemma}[Faigle \& Tur\'{a}n~\cite{Faigle&Turan}]
{\sc Poset--BinInsertionSort} sorts any partial order $\po$ of width
at most $\wid$ on $n$ elements using at most $O(\wid n \log n)$ oracle queries.
\end{lemma}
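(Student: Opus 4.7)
The plan is to show (i) that the two binary searches on each chain correctly recover the relationship between the new element $e$ and every element of that chain, and then (ii) bound the number of queries.

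For correctness, fix an iteration of the while loop. Since $\po'$ is a sub-poset of $\po$, its width is at most $\wid$, so by Dilworth's Theorem a chain decomposition $\C = \{C_1,\ldots,C_q\}$ with $q \le \wid$ exists, justifying Step 4c. I would then argue that on each chain $C_i = \{e_{i1} \prec e_{i2} \prec \cdots \prec e_{i\ell_i}\}$, the set of elements that dominate $e$ is an up-set of $C_i$ and the set of elements dominated by $e$ is a down-set of $C_i$. Concretely, if $e_{ij} \dom e$ then for every $k > j$ we have $e_{ik} \dom e_{ij} \dom e$, so $e_{ik} \dom e$ by transitivity; symmetrically for the dominated set. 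Thus if $j^*$ is the smallest index with $e_{ij^*} \dom e$ and $k^*$ is the largest index with $e \dom e_{ik^*}$, then $k^* < j^*$ (else transitivity would give $e \dom e$, contradicting irreflexivity), and the elements of $C_i$ with index strictly between $k^*$ and $j^*$ are incomparable to $e$. Binary searches for $j^*$ and $k^*$ are therefore well-defined, and upon completing them for every chain the algorithm knows the relation between $e$ and every element of $P'$; Step 4e then records these relations consistently.

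For the query bound, the two binary searches on chain $C_i$ use $O(\log |C_i|) = O(\log n)$ oracle queries, and there are at most $\wid$ chains, so one insertion costs $O(\wid \log n)$ queries. Summing over the $n-1$ insertions in the while loop gives $O(\wid n \log n)$ queries, as claimed. Step 5 builds the output $\chainmerge$ structure from relations already known and needs no further queries (per Claim~\ref{thm:chainmerge}, which was designed to work once the relations are in hand, although here we simply read them from $\mathcal R'$).

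The only subtlety — and thus the place I would be most careful — is the justification that Step 4c is implementable, i.e.\ that a chain decomposition into $\wid$ chains of $\po'$ can in fact be produced. For the query-complexity statement this is immaterial, because constructing the decomposition requires no oracle queries once $\po'$ is known explicitly; it is only total (computational) complexity that is affected, and this lemma is concerned solely with the query count. With that point flagged, the bound $O(\wid n \log n)$ follows.
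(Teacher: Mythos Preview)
Your proposal is correct and follows essentially the same approach as the paper's proof: correctness via the up-set/down-set structure of each chain (which the paper just calls ``clear'' and defers to the \chainmerge\ argument), and the query bound via $O(\log n)$ per binary search, at most $2\wid$ searches per insertion, and $n-1$ insertions. Your version is simply more explicit than the paper's one-paragraph sketch, and your observation that Step~4c costs no oracle queries is exactly the right way to dispose of that potential concern.
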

\begin{proof}
The correctness of {\sc Poset--BinInsertionSort} should be clear
from its description. (The simple argument showing that Step 4e can be executed based on the information
obtained in Step 4d is similar to the proof for the $\chainmerge$ data structure in Section~\ref{sec:chainmerge}.)
It is not hard to see that the number of oracle queries incurred by {\sc Poset--BinInsertionSort}
for inserting each element is $O(\wid \log n)$ and, therefore, the
total number of queries is $O(\wid n \log n)$.
\end{proof}

It follows that, as $n$
scales, the number of queries incurred by the algorithm is more by a
factor of $\wid$ than the lower bound. The Achilles' heel of
the {\sc Poset-BinInsertionSort} algorithm is in the method of insertion of an element---specifically, in
the way the binary searches of Step 4d are performed.
In these sequences of queries, no structural properties of $\po'$ are used for deciding
which queries to the oracle are more useful than others; in some sense, the binary searches give the
same ``attention'' to queries whose answer would greatly decrease the
 number of remaining possibilities and those
whose answer is not very informative. However, as we
discuss earlier in this section, a sorting algorithm that always
makes the most informative query is not guaranteed to be optimal.

Our algorithm tries to resolve this dilemma. We suggest a
scheme that has the same structure as the {\sc
Poset--BinInsertionSort} algorithm but exploits the structure of
the already constructed poset $\po'$ in order to amortize the cost
of the queries over the insertions.  The amortized query cost
matches the information-theoretic bound.

The new algorithm, named {\sc EntropySort}, modifies the binary
searches of Step 4d into weighted binary searches. The weights
assigned to the elements satisfy the following property: the number
of queries it takes to insert an element into a chain is
proportional to the number of candidate posets that will be
eliminated after the insertion of the element. In other words, we spend fewer
queries for insertions that are not informative and more queries for insertions
that are informative. In some sense, this corresponds to an {\em
entropy-weighted binary search}.  To define this notion precisely, we use the following definition.

\begin{definition}
Suppose that $\po'=(P',{\cal R}')$ is a poset of width at most
$\wid$, $U$ a set of elements such that $U \cap P' = \emptyset$,
$u \in U$ and ${\cal ER}, {\cal PR} \subseteq (\{u\} \times P')
\cup (P'\times \{u\})$. We say that $\po = (P' \cup U, {\cal R})$
is {\em a width $\wid$ extension of $\po'$ on $U$ conditioned on
$({\cal ER},{\cal PR})$}, if $\po$ is a poset of width $w$, ${\cal
R} \cap (P'\times P') = {\cal R}'$ and, moreover, ${\cal ER}
\subseteq {\cal R}$, ${\cal R} \cap {\cal PR} = \emptyset$. In
other words, $\po$ is an extension of $\po'$ on the elements of
$U$ which is consistent with $\po'$, it contains the relations of $u$
to $P'$ given by $\cal ER$ and does not contain the relations of $u$ to $P'$ given by $\cal PR$.
The set $\cal ER$ is then called the {\em set of enforced
relations} and the set $\cal PR$ the {\em set of prohibited
relations}.
\end{definition}
\noindent We give in Figure~\ref{fig:poset-entropysortstep-pseudo}
the pseudocode of Step 4d$^\prime$ of {\sc EntropySort}, which replaces Step
4d of {\sc Poset--BinInsertionSort}.
\begin{figure} \centering
\fbox{\begin{minipage}{6in}
\begin{program}
{\bf Step 4d$^\prime$ for Algorithm} $\textsc{EntropySort}(\po)$\\
\\
{\bf 4d$^\prime$.}~ $\cal ER = \emptyset$; $\cal PR = \emptyset$;\\
\> {\bf for} $i=1,\ldots,q$\\
\>\> {\bf i.} let $C_i=\{e_{i1},\ldots,e_{i\ell_i}\}$, where
$e_{i\ell_i} \dom \ldots \dom e_{i2} \dom e_{i1}$;\\
\>\> {\bf ii.}~{\bf for} $j=1,\ldots,\ell_i+1$\\
\>\>\> $\bullet$ set ${\cal ER}_j= \{(e_{ik},e)|j \le k \le \ell_{i}\}$; set ${\cal PR}_j= \{(e_{ik},e)|1 \le k < j\}$;\\
\>\>\> $\bullet$ compute ${\cal D}_{ij}$, the number of $\wid$-extensions of
$\po'$ on $U$,\\
\>\>\>\> conditioned on $({\cal ER}\cup {\cal ER}_j, {\cal
PR} \cup {\cal PR}_j)$;\\
\>\>\>\> /* ${\cal D}_{ij}$ represents the number of
posets on $P$ consistent with $\po'$, $(\cal ER$, $\cal
PR$),\\
\>\>\>\>\> in which $e_{ij}$ is the smallest element of chain $C_i$
that dominates $e$;\\
\>\>\>\>\> $j=\ell_i+1$ corresponds to the case that no element of $C_i$ dominates $e$;*/\\
\>\>\> {\bf endfor}\\
\>\> {\bf iii.} set ${\cal D}_i = \sum_{j=1}^{\ell_i+1}{{\cal
D}_{ij}};$ \\
\>\>\> /* ${\cal D}_i$ is equal to the total number of $\wid$-extensions of $\po'$ on $U$\\
\>\>\>\> conditioned on $({\cal ER},{\cal PR})$*/\\
\>\> {\bf iv.} partition the unit interval
$[0,1)$ into $\ell_i+1$ intervals $([b_j,t_j))_{j=1}^{\ell_i+1}$,\\
\>\>\> where $b_1=0$, $b_j = t_{j-1}$, for all $j \ge 2$,\\
\>\>\> and $t_j = (\sum_{j' \leq j} {\cal D}_{ij'})/{\cal D}_i$, for all $j \ge 1$.\\
\>\>\> /* each interval corresponds to an element of $C_i$ or a ``dummy'' element $e_{i\ell_i+1}$ */ \\
\>\>{\bf v.} do binary search on $[0,1)$ to find smallest element (if any) of $C_i$ that dominates $e$: \\
\>\>\> /* weighted version of binary search in Step 4dii of {\sc Poset--BinInsertionSort} */ \\
\>\>\> set $x=1/2$; $t=1/4$; $j^*=0$; \\
\>\>\> {\bf repeat:} find $j$ such that $x \in [b_j,t_j)$; \\
\>\>\>\> {\bf if}~ ($j=\ell_i+1$ {\bf and} $e_{i,j-1} \ndom e$)~{\bf OR}~($e_{ij}\dom e$ {\bf and} $j=0$)~{\bf OR}~($e_{ij}\dom e$ {\bf and} $e_{i,j-1} \ndom e$) \\
\>\>\>\>\> {\bf set} $j^*=j$; break; /* found smallest element in
$C_i$ that dominates $e$ */ \\
\>\>\>\> {\bf else if} ($j=\ell_i+1$)~{\bf OR}~($e_{ij}\dom e$) \\
\>\>\>\>\> {\bf set} $x=x-t$; $t=t*1/2$; /* look below */ \\
\>\>\>\> {\bf else} \\
\>\>\>\>\> {\bf set} $x=x+t$; $t=t*1/2$; /* look above */ \\
\>\> {\bf vi.} $e_{ij^*}$ is the smallest element of chain $C_i$
that dominates $e$;\\ 
\>\>\>~ set ${\cal ER} := {\cal ER} \cup {\cal ER}_{j^*}$ and ${\cal PR} :=
{\cal PR} \cup {\cal PR}_{j^*}$;\\
\>\> {\bf vii.} find the largest element (if any) of
chain $C_i$ that is dominated by $e$:\\
\>\>\> for $j=0,1,\ldots,\ell_i$, \\
\>\>\>\> compute ${\cal D}'_{ij}$, the number of posets on $P$
consistent with $\po'$, $(\cal ER$, $\cal PR$),\\
\>\>\>\>~~~ in which $e_{ij}$ is the largest element of chain $C_i$ dominated by $e$;\\
\>\>\>\> /* $j=0$ corresponds to case that no element of $C_i$ is dominated by $e$; */\\
\>\>\> let ${\cal D}'_i = \sum_{j=0}^{\ell_i}{{\cal D}'_{ij}}$;\\
\>\>\> do the weighted binary search analogous to that of Step v;\\
\>\> {\bf viii.} update accordingly the sets ${\cal ER}$ and ${\cal PR}$;\\
\> {\bf endfor}
\end{program}
\caption{Algorithm {\sc EntropySort} is obtained by substituting Step 4d$^\prime$ above for Step 4d of the pseudo-code in Figure~\ref{fig:poset-insertionsort-pseudo} for {\sc Poset--BinInsertionSort}.
\label{fig:poset-entropysortstep-pseudo}}
\end{minipage}
}
\end{figure}

The correctness of the {\sc EntropySort} algorithm follows trivially
from the correctness of {\sc Poset--BinInsertionSort}. We prove
next that its query complexity is optimal. Recall that $\nposets$ denotes the number of
partial orders of width $w$ on $n$ elements.

\begin{theorem}\label{th:entropysortanalysis}
{\sc EntropySort} sorts any partial order $\po$ of width at most $\wid$
on $n$ elements using at most $2\log \nposets + 4\wid n = \Theta(n\log
n+\wid n)$ oracle queries. In particular, the query complexity of the algorithm is at most $2 n \log n + 8 \wid n + 2\wid \log \wid$.
\end{theorem}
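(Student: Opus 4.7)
\medskip

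\noindent\textbf{Proof plan.} My plan is an amortized charging argument: every oracle query that \textsc{EntropySort} asks is paid for by the multiplicative decrease it induces on the number of width-$\wid$ extensions of $\po'$ still consistent with the queries asked so far, so that summing the costs telescopes into a single logarithm.

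Define the potential $N$ at any moment of the execution as the number of width-$\wid$ extensions of the current $\po'$ on the set of not-yet-inserted elements (together with the element $e$ currently being inserted) that are consistent with the current $({\cal ER},{\cal PR})$ accumulated during the insertion of $e$. Immediately after Step 1, $N \leq \nposets$, and at termination $N = 1$. A single weighted binary search in Step v (or the analogue in Step vii) augments $({\cal ER},{\cal PR})$ with $({\cal ER}_{j^*},{\cal PR}_{j^*})$, so by the very definition of ${\cal D}_{ij^*}$ it multiplies $N$ by exactly $p_{j^*} := {\cal D}_{ij^*}/{\cal D}_i$.

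Next I would bound the oracle cost of a single search in terms of $p_{j^*}$. The bisection in Step v halves the length of its search window at every iteration while keeping $x$ as the midpoint, so after $k$ iterations the window has length $2^{-k}$. Once $2^{-k} \leq p_{j^*}$, the window is forced to lie inside $[b_{j^*},t_{j^*})$, forcing $x$ to land there and the search to halt, so the search takes at most $\lceil \log_2(1/p_{j^*}) \rceil + O(1)$ iterations. Each iteration issues at most two oracle queries (one comparison of $e$ with $e_{ij}$ and possibly one with $e_{i,j-1}$), so a single search uses at most $2\log_2(1/p_{j^*}) + O(1)$ queries.

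Summing and telescoping the logarithms over all searches yields
\[ \sum_{\text{searches}} 2\log_2(1/p_{j^*}) \;=\; 2\log_2\frac{N_{\text{initial}}}{N_{\text{final}}} \;\leq\; 2\log \nposets. \]
Since each of the $n$ insertions triggers at most $2q \leq 2\wid$ searches, the $O(1)$ additive terms contribute at most $4\wid n$ overall, giving query complexity at most $2\log\nposets + 4\wid n$; plugging in the explicit upper bound on $\log\nposets$ from Theorem~\ref{thm:info_sorting_lower_bound} then yields the second numerical bound. The main technical subtlety I anticipate is making the bisection argument fully rigorous for an arbitrary nonuniform partition of $[0,1)$: one must check that, regardless of where the target interval $[b_{j^*},t_{j^*})$ sits, the particular dyadic trajectory of $x$ prescribed in Step v enters it within the claimed number of steps. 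Once that is pinned down, the telescoping and counting are mechanical.
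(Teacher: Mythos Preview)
Your plan is correct and mirrors the paper's proof almost exactly: the paper likewise bounds a single weighted binary search by $2(1+\log({\cal D}_i/{\cal D}_{ij^*}))$ queries via the dyadic-point argument you sketch, then telescopes these bounds through the chain $Z_k={\cal D}_1$, ${\cal D}_{i\pi_i}={\cal D}'_i$, ${\cal D}'_{i\kappa_i}={\cal D}_{i+1}$, $Z_{k+1}={\cal D}'_{q\kappa_q}$ to get at most $4\wid+2\log(Z_k/Z_{k+1})$ queries per insertion, and finally sums over insertions. The only cosmetic difference is that you telescope globally across all searches while the paper first packages the telescoping into a per-insertion lemma; the arithmetic and the key dyadic-rational argument (which you rightly flag as the one place needing care) are identical.
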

\begin{proof}
We first characterize the number of oracle calls required by the
weighted binary searches.
\begin{lemma}[Weighted Binary Search]\label{lem:weighted_binary_search}
For every $j \in \{1,2,\ldots,\ell_i+1\}$, if $e_{ij}$ is the
smallest element of chain $C_i$ which dominates element $e$
($j=\ell_i+1$ corresponds to the case where no element of chain
$C_i$ dominates $e$), then $j$ is found after at most
$2\cdot(1+\log{\frac{{\cal D}_{i}}{{\cal D}_{ij}}})$ oracle
queries in Step {\em v.} of the algorithm described above.
\end{lemma}
\begin{proof}
Let $\lambda=\frac{{\cal D}_{ij}}{{\cal D}_{i}}$ be the length of
the interval that corresponds to $e_{ij}$. We wish to prove that
the number of queries needed to find $e_{ij}$ is at most $2
(1+\floor{\log \frac{1}{\lambda}})$. From the definition of the
weighted binary search, we see that if the interval corresponding to $e_{ij}$ contains a point of the form $2^{-r} \cdot
m$ in its interior, where $r,m$ are integers, then the search reaches $e_{ij}$
after at most $r$ steps. Now, an interval of length $\lambda$
must include a point of the form $2^{-r} \cdot m$, where
$r=1+\floor{\log \frac{1}{\lambda}}$, which concludes the proof.
\end{proof}

It is important to note that the number of queries spent by the
weighted binary search is small for uninformative insertions, which
correspond to large ${\cal D}_{ij}$'s, and large for informative
ones, which correspond to small ${\cal D}_{ij}$'s. Hence, our use of
the term entropy-weighted binary search. A parallel of Lemma
\ref{lem:weighted_binary_search} holds, of course, for finding the
largest element of chain $C_i$ dominated by element $e$.

Suppose now that $P=\{e_1,\ldots,e_n\}$, where $e_1,
e_2,\ldots,e_n$ is the order in which the elements of $P$ are
inserted into poset $\po'$. Also, denote by $\po_k$ the restriction
of poset $\po$ onto the set of elements $\{e_1, e_2, ..., e_k\}$
and by $Z_k$ the number of width $\wid$ extensions of poset
$\po_k$ on $P\setminus \{e_1,\ldots,e_k\}$ conditioned on
$(\emptyset, \emptyset)$. Clearly, $Z_0 \equiv \nposets$
and $Z_n =1$. The following lemma is sufficient to establish the
optimality of {\sc EntropySort}.

\begin{lemma} \label{lem:query complexity of entropy sort step}{\sc EntropySort} needs at most
$4\wid + 2\log{\frac{Z_k}{Z_{k+1}}}$ oracle queries to insert
element $e_{k+1}$ into poset $\po_{k}$ in order to obtain
$\po_{k+1}$.
\end{lemma}

\begin{proof} Let $\C=\{C_1,\ldots,C_q\}$ be the
chain decomposition of the poset $\po_k$ constructed at Step 4c of
{\sc EntropySort} in the iteration of the algorithm in which
element $e_{k+1}$ needs to be inserted into poset $\po_k$. Suppose
also that, for all $i \in \{1,\ldots,q\}$, $\pi_{i} \in
\{1,\ldots,\ell_i+1\}$ and $\kappa_i \in \{0,1,\ldots,\ell_i\}$
are the indices computed by the binary searches of Steps v. and
vii. of the algorithm. Also, let ${\cal D}_{i}$, ${\cal D}_{ij}$,
$j \in \{1,\ldots,\ell_i+1\}$, and ${\cal D}'_{i}$, ${\cal
D}'_{ij}$, $j \in \{0,\ldots,\ell_i\}$, be the quantities computed
at Steps ii., iii. and vii. It is not hard to see that the
following are satisfied
\begin{align*}
Z_k &= {\cal D}_1~~~~~~~~~~~~~~~~~~~~~~~~~{\cal D}'_{q\kappa_q} = Z_{k+1}\\
{\cal D}_{i\pi_i} &= {\cal D}'_i, \forall
i=1,\ldots,q~~~~~~~~{\cal D}'_{i\kappa_i}= {\cal D}_{i+1}, \forall
i=1,\ldots,q-1
\end{align*}
Now, using Lemma~\ref{lem:weighted_binary_search}, it follows that
the total number of queries required to construct $\po_{k+1}$ from
$\po_k$ is at most
$$\sum_{i=1}^q{\left(2+2 \log {\frac{{\cal D}_{i}}{{\cal D}_{i\pi_i}}} + 2 + 2\log {\frac{{\cal D}'_{i}}{{\cal D}'_{i\kappa_i}}}\right)} \le 4\wid + 2\log{\frac{Z_k}{Z_{k+1}}}.$$\end{proof}

\noindent Using Lemma~\ref{lem:query complexity of entropy sort step}, the query complexity of {\sc EntropySort} is
\begin{align*}\sum_{k=0}^{n-1}{(\text{\# queries needed to insert element $e_{k+1}$})} &= \sum_{k=0}^{n-1}{\left(4\wid + 2\log{\frac{Z_k}{Z_{k+1}}}\right)} \\&= 4\wid n + 2 \log{\frac{Z_0}{Z_n}} = 4\wid n+ 2\log \nposets.
\end{align*}
Taking the logarithm of the upper bound in Theorem~\ref{thm:info_sorting_lower_bound}, it follows that the number of queries required by the algorithm is
$2 n \log n + 8 \wid n + 2\wid \log \wid$.

\end{proof}

\subsection{An efficient sorting algorithm}
\label{sec:efficient}

In this section, we turn to the problem of efficient sorting.
Superficially, the \textsc{Poset-Mergesort} algorithm that we
present has a recursive structure that is similar to the classical
Mergesort algorithm.
The merge step is quite different, however; it makes crucial use of
the technical \peeling\ algorithm in order to efficiently maintain a
small chain decomposition of the poset throughout the recursion.
The \peeling\ algorithm, described formally in Section
\ref{subsec:peeling}, is a specialization of the classic flow-based bipartite-matching algorithm
\cite{FordFulkerson} that is efficient in the comparison model.

\subsubsection{Algorithm {\sc Poset-Mergesort}}
\label{subsec:poset-mergesort}

Given a set $P$, a query oracle for a poset $\po=(P, \dom)$, and an
upper bound of $\wid$ on the width of $\po$, the {\sc
Poset-Mergesort} algorithm produces a decomposition of $\po$ into
$\wid$ chains and concludes by building a \chainmerge\ data
structure. To get the chain decomposition, the algorithm partitions
the elements of $P$ arbitrarily into two subsets of (as close as
possible to) equal size; it then finds a chain decomposition of each
subset recursively. The recursive call returns a decomposition of
each subset into at most $\wid$ chains, which constitutes a
decomposition of the whole set $P$ into at most $2\wid$ chains. Then
the \peeling\ algorithm of Section~\ref{subsec:peeling} is applied
to reduce the decomposition to a decomposition of $\wid$ chains.
Given a decomposition of $P'\subseteq P$, where $m=\card{P'}$, into
at most $2\wid$ chains, the \peeling\ algorithm returns a
decomposition of $P'$ into $\wid$ chains using $2 \wid m$ queries and
$O(\wid^2 m)$ time. Figure~\ref{fig:poset-mergesort-pseudo}
shows pseudo-code
for \textsc{Poset-Mergesort}.

\begin{theorem} \label{thm:mergesort}
\textsc{Poset-Mergesort} sorts
any poset $\po$ of width at most $\wid$ on $n$ elements using at most $2\wid n\log (n/ \wid))$
queries, with total complexity $O(\wid^2 n\log (n/ \wid))$.
\end{theorem}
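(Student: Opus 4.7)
The plan is to analyze the natural divide-and-conquer recurrence produced by \textsc{Poset-Mergesort}, invoking the guarantees on \peeling\ from Section~\ref{subsec:peeling} as a black box. Correctness follows by induction on the size of the recursive call: a subset $P' \subseteq P$ inherits width at most $\wid$, so by Dilworth's theorem a $\wid$-chain decomposition of $P'$ exists; the two recursive calls return decompositions of size at most $\wid$ each, whose union is a decomposition of $P'$ of size at most $2\wid$; and \peeling\ correctly converts this into a $\wid$-chain decomposition. At the leaves, where $|P'| \le \wid$, the trivial decomposition into singletons already has at most $\wid$ chains and consumes no oracle queries. The final step assembles the output \chainmerge\ from the top-level $\wid$-chain decomposition via Claim~\ref{thm:chainmerge}.

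For the query bound, I would unroll the recursion level by level. The recursion tree has depth $\lceil \log_2(n/\wid) \rceil$, and at each depth the subproblems partition $P$, so their sizes sum to $n$. Since \peeling\ on a subset of size $m$ costs at most $2\wid m$ queries, the total query cost contributed by all subproblems at a fixed depth is exactly $2\wid n$. Summing over the $\log(n/\wid)$ non-trivial depths gives at most $2\wid n \log(n/\wid)$ queries during the recursion. The concluding construction of $\chainmerge$ uses an additional $2\wid n$ queries by Claim~\ref{thm:chainmerge}, which is absorbed into the leading term.

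The total-complexity bound is identical in form, with the $O(\wid^2 m)$ time bound on \peeling\ replacing the $2\wid m$ query bound; level-summation yields $O(\wid^2 n \log(n/\wid))$, to which the $O(n)$ cost of splitting at each level and the $O(\wid n)$ cost of the final \chainmerge\ construction add only lower-order terms.

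The main obstacle is thus entirely internalized in the \peeling\ subroutine: I must justify that the bipartite-matching reduction can be carried out in the comparison model on a $2\wid$-chain decomposition of a width-$\wid$ poset using only $2\wid m$ oracle queries and $O(\wid^2 m)$ total work. Modulo that subroutine (proved in Section~\ref{subsec:peeling}), Theorem~\ref{thm:mergesort} reduces to the level-by-level summation above.
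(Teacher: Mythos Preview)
Your proposal is correct and is essentially the paper's own proof: the paper likewise treats correctness as immediate, invokes the $2\wid m$ query and $O(\wid^2 m)$ time guarantees of \peeling\ as a black box, sets up the recurrences $Q(m)\le 2Q(m/2)+2\wid m$ and $T(m)=2T(m/2)+O(\wid^2 m)$ with base case $m\le\wid$, and observes that the final \chainmerge\ cost is negligible. Your level-by-level unrolling is just the standard way to solve those recurrences, so there is no substantive difference.
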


\begin{figure} \centering
\fbox{\begin{minipage}{6in}
\begin{program}
{\bf Algorithm} $\textsc{Poset-Mergesort}(\po)$ \\
~{\bf input:} a set $P$, a query oracle for a poset $\po=(P, \dom)$, and upper bound $\wid$ on width of $\po$ \\
~{\bf output:} a $\chainmerge$ data structure for $\po$ \\
\\
        \>     run $\textsc{Poset-Mergesort-Recurse}(P)$ producing a decomposition $\C$ of $\po$ into $\wid$ chains;\\
        \>     build and \RETURN $\chainmerge(\po, \C)$;\\
        \\

{\bf Procedure} $\textsc{Poset-Mergesort-Recurse}(P')$ \\
 ~{\bf input:} a subset $P' \subseteq P$, a query oracle for $\po = (P, \dom)$, an upper bound $\wid$ on the width of $\po$\\
 ~{\bf output:} a decomposition into at most $\wid$ chains of the poset $\po'$ induced by $\dom$ on $P'$ \\
\\
        \>     \IF $\card{P'} \le w$\\
        \>     \THEN \RETURN the trivial decomposition of
           $\po'$ into chains of length $1$\\
        \>     \ELSE \\
        \>     \> {\bf 1.}~    partition $P'$ into two parts of equal size, $P_1'$ and
$P_2'$;\\
        \>     \> {\bf 2.}~    run $\textsc{Poset-Mergesort-Recurse}(P_1')$
        and $\textsc{Poset-Mergesort-Recurse}(P_2')$; \\
        \>     \> {\bf 3.}~    collect the outputs to get a decomposition $\C$ of $\po'$ into $q \leq 2\wid$
chains;\\
        \>     \> {\bf 4.}~    if $q > \wid$, run $\peeling(\po, \C)$, to get a decomposition  $\C'$ of $\po'$ into $\wid$ chains;\\
        \>     \> \RETURN $\C'$;
\end{program}
\caption{pseudo-code for \textsc{Poset-Mergesort}
\label{fig:poset-mergesort-pseudo}}
\end{minipage}
}
\end{figure}

\begin{proof} The correctness of \textsc{Poset-Mergesort} is immediate. Let $T(m)$ and $Q(m)$
be the worst-case total and query complexity, respectively, of the
procedure \textsc{Poset-Mergesort-Recurse} on a poset of width
$\wid$ containing $m$ elements. When $m \le \wid$, $T(m)=O(\wid)$
and $Q(m)=0$. When $m> \wid$, $T(m)=2T(m/2)+O(\wid^2m)$ and $Q(m)
\le 2Q(m/2)+2\wid m$. Therefore, $T(n) = O(  \wid^2n \log (n/ \wid)
)$ and $Q(n) \le 2\wid n \log (n/ \wid)$. The cost incurred by the
last step of the algorithm, i.e. that of building the \chainmerge, is negligible.
\end{proof}

\subsubsection{The {\sc Peeling} algorithm} \label{subsec:peeling}

In this section we present an algorithm that efficiently reduces the
size of a given decomposition of a poset.
It can be seen as an adaptation of the classic flow-based bipartite-matching
algorithm \cite{FordFulkerson} that is designed to be efficient in the oracle model.
The \peeling\ algorithm is given an oracle for poset $\po = (P,
\dom)$, where $n=\card{P}$, and a decomposition of $P$ into at most
$2w$ chains. It first builds a $\chainmerge$ data structure using at
most $2qn$ queries and time $O(qn)$.  Every query the algorithm
makes after that is actually a look-up in the data structure and
therefore takes constant time and no oracle call.

The \peeling\ algorithm proceeds in a number of {\em peeling iterations}. Each iteration produces a decomposition of $\po$ with one less chain, until after at most $\wid$ peeling iterations, a
decomposition of $\po$ into $\wid$ chains is obtained.  A detailed formal description
of the algorithm is given in Figure~\ref{fig:peeling-pseudo}.

\begin{theorem} Given an oracle for $\po = (P, \dom)$, where $n=\card{P}$,
and a decomposition of $\po$ into at most $2\wid$ chains, the \peeling\ algorithm returns a decomposition of $\po$
into $\wid$ chains. It has query complexity at most $2\wid n$ and total complexity $O(\wid^2 n)$.
\end{theorem}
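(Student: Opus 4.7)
The plan is to cast the peeling problem in the standard Dilworth / Ford--Fulkerson bipartite-matching framework, and then make each augmenting-path search work inside the comparison model via the \chainmerge\ data structure. Form the bipartite graph $G$ on two copies $P^L,P^R$ of $P$ with an edge $(u^L,v^R)$ whenever $u\dom v$; a chain decomposition of $\po$ of size $s$ corresponds to a matching of size $n-s$ in $G$ consisting exactly of the consecutive pairs within each chain. Since $\wid(\po)\le\wid$, Dilworth's theorem provides a matching of size $n-\wid$, so every matching of size $n-s$ with $s>\wid$ admits an augmenting path; flipping the path turns a decomposition of size $s$ into a decomposition of size $s-1$. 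Starting from the input decomposition of size $q\le 2\wid$, at most $\wid$ peeling iterations produce a decomposition into exactly $\wid$ chains, giving correctness.

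For the query bound, I would observe that every oracle call is spent in the one-time construction of $\chainmerge(\po,\C)$ at the start of the algorithm; every relation test used by the subsequent augmenting-path searches is then a constant-time look-up. A sharper accounting than the one in Claim~\ref{thm:chainmerge} is enough: for each unordered pair $(C_i,C_j)$ a simultaneous two-pointer merge recovers the required largest-dominated indices in both directions using $|C_i|+|C_j|$ queries, and summing over the $\binom{q}{2}$ pairs gives $(q-1)n\le(2\wid-1)n<2\wid n$ queries in total.

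The bound on total complexity is where the substance of the argument lies, because the residual graph of $G$ can have $\Theta(n^2)$ edges even for small $\wid$. The key structural fact is that, for any left vertex $u^L$ and any chain $C_j$, the set $\{v\in C_j:u\dom v\}$ is a bottom-closed prefix of $C_j$, by transitivity and the chain property. I would therefore implement each iteration as a BFS launched from all chain bottoms (the unmatched left vertices) that, for every chain $C_j$, maintains a single monotone pointer to the highest index of $C_j$ already enqueued; when a left vertex $u^L$ is dequeued, one \chainmerge\ look-up per chain locates the top of $u$'s dominance in $C_j$ and enqueues only the right vertices lying between the previous pointer and that boundary. Each right vertex is enqueued at most once across the search, and each of the $O(n)$ processed left vertices contributes $O(q)$ pointer work, yielding $O(qn)=O(\wid n)$ time per iteration, hence $O(\wid^2 n)$ over the at most $\wid$ iterations (the initial \chainmerge\ build costs another $O(\wid n)$, absorbed in the bound). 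The main obstacle will be verifying this per-iteration budget in detail: monotonicity of the per-chain pointers within a single BFS, correct reset between iterations, and reassembling the new chain decomposition from the flipped matching in the same $O(\wid n)$ time.
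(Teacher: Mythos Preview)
Your approach is correct and reaches the same bounds, but it differs from the paper's in how each peeling iteration is implemented. The paper does not run an explicit BFS on the residual bipartite graph; instead it works directly with the current chain decomposition: it keeps shrinking copies $C'_1,\dots,C'_q$ of the current chains by repeatedly locating two comparable current top elements and ``dislodging'' the larger, and when some $C'_i$ becomes empty it backtraces through the sequence of dislodgements to extract an augmenting path. The $O(qn)$ bound per iteration then follows from a simple charging argument: each element becomes a top element at most once, and at that moment contributes at most $q$ candidate pairs to a worklist $L$; every unit of work is paid for by the removal of some pair from $L$.

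Your version instead exploits that, with respect to the \emph{original} chains $C_j$ on which \chainmerge\ was built, the set $\{v\in C_j:u\dom v\}$ is a prefix of $C_j$, so forward BFS edges can be enumerated with monotone per-chain pointers and one \chainmerge\ look-up per (left vertex, chain) pair. This is a legitimate alternative, but be careful to separate the two decompositions in the write-up: the prefix/pointer machinery and the $O(1)$ boundary look-ups must reference the original chains, while the matched edges and the free left/right vertices come from the \emph{current} matching, which changes after each augmentation. (A single \chainmerge\ look-up does \emph{not} give the top of $u$'s dominance in a chain of the current decomposition.) Your list of obstacles (``reset between iterations'', reassembly) suggests you see this, but it should be made explicit. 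The paper's dislodgement approach buys a simpler invariant---everything lives in the current decomposition, and \chainmerge\ is used only for $O(1)$ pairwise comparisons---while yours buys a cleaner connection to the textbook Ford--Fulkerson picture.
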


\begin{figure} \centering
\fbox{\begin{minipage}{6in}
\begin{program}
{\bf Algorithm} $\textsc{Peeling}(\po, \C)$\\
 {\bf input:} a query oracle for poset $\po = (P, \dom)$, an upper bound of $\wid$ on the width of $\po$, \\
  ~~~~~~~~~~~and a decomposition $C=\set{C_1, \ldots, C_q}$ of $\po$, where $q \leq 2w$ \\
 {\bf output:} a decomposition of $\po$ into $\wid$ chains\\
\\
\> build $\chainmerge(\po, \C)$; /* All further queries are look-ups. */\\
\> {\bf for}~ $i = 1, \ldots, q$\\
\>  \> construct a linked list for each chain $C_i = e_{i\ell_i}\rightarrow \cdots \rightarrow e_{i2} \rightarrow e_{i1}$,\\
\>\>\> where $e_{i\ell_i} \dom \cdots \dom e_{i2} \dom e_{i1}$;\\
\> {\bf while}~ $q > \wid$, perform a peeling iteration: \\
\>  \> {\bf 1.~ for}~ $i = 1, \ldots, q$, set $C'_i = C_i$;\\
\>  \> {\bf 2.~ while}~ every $C'_i$ is nonempty\\
\>\>  \>  \> /* the largest element of each $C'_i$ is a \emph{top element} */ \\
\>\>  \>  \> {\bf a.}~ find a pair $(x, y)$, $x\in C'_i$, $y\in C'_j$, of top elements such that $y\dom x$;\\
\>\>  \>  \> {\bf b.}~ delete $y$ from $C'_j$;  ~~ /* $x$ \emph{dislodges} $y$ */\\
\>  \> {\bf 3.~} in sequence of dislodgements, find subsequence $(x_1,y_1), \ldots, (x_t,y_t)$ such that:\\
\>\>  \>  \> $\bullet$~ $y_t$ is the element whose deletion (in step 2b) created an empty chain;\\
\>\>  \>  \> $\bullet$~ for $i =2,\ldots,t$, $y_{i-1}$ is the parent of $x_i$ in its original chain;\\
\>\>  \>  \> $\bullet$~ $x_1$ is the top element of one of the original chains;\\
\>  \> {\bf 4.~} modify the original chains $C_1,\ldots,C_q$:\\
\>\>  \> \> {\bf a.~ for}~  $i = 2,\ldots,t$\\
\>\>  \>  \>\> ~~~{\bf i.}~ delete the pointer going from $y_{i-1}$ to $x_i$;\\
\>\>  \>  \> \>~~~{\bf ii.}~ replace it with a pointer going from $y_i$ to $x_i$;\\
\>  \> \> \>{\bf b.~} add a pointer going from $y_1$ to $x_1$;\\
\>  \>  {\bf 5.~}set $q=q-1$, and re-index the modified original chains from 1 to $q-1$;\\
\> {\bf return}~ the current chain decomposition, containing $\wid$ chains
\end{program}
\caption{pseudo-code for the {\sc Peeling} Algorithm
\label{fig:peeling-pseudo}}
\end{minipage}
}
\end{figure}

\begin{proof}
To prove the correctness of one peeling iteration, we first observe
that it is always possible to find a pair $(x,y)$ of top elements
such that $y \dom x$, as specified in Step 1a, since the size of any
anti-chain is at most the width  of $\po$, which is less than the
number of chains in the decomposition.
We now argue that it is possible to find a subsequence of
dislodgements as specified by Step 2a. Let $y_t$ be the element
defined in step 3 of the algorithm. Since $y_t$ was dislodged by
$x_t$, $x_t$ was the top element of some list when that happened.
In order for $x_t$ to be a top element, it was either top from the
beginning, or its parent $y_{t-1}$ must have been dislodged by
some element $x_{t-1}$, and so on.

We claim that, given a decomposition into $q$ chains, one peeling iteration produces a
decomposition of $\po$ into $q-1$ chains.
Recall that $y_1 \dom
x_1$ and, moreover, for every $i$, $2\le i \le t$, $y_i \dom x_i$,
and $y_{i-1} \dom x_i$. Observe that after Step 4 of the
peeling iteration, the total number of pointers has increased by
$1$. Therefore, if the link structure remains a union of
disconnected chains, the number of chains must have decreased by
$1$, since $1$ extra pointer implies $1$ less chain. It can be seen
that the switches performed by Step 4 of the algorithm maintain the invariant
that the in-degree and out-degree of every vertex is bounded by $1$.
Moreover, no cycles are introduced since every pointer that is
added corresponds to a valid relation. Therefore, the link structure is indeed a union of
disconnected chains.

The query complexity of the \peeling\ algorithm is exactly the query
complexity of \chainmerge, which is $2\wid n$.  We show next that
one peeling iteration can be implemented in time $O(qn)$, which
implies the claim.

In order to implement one peeling iteration in time $O(qn)$, a
little book-keeping is needed, in particular, for Step 2a.  We
maintain during the peeling iteration a list $L$ of
potentially-comparable pairs of elements. At any time, if a pair
$(x,y)$ is in $L$, then $x$ and $y$ are top elements. At the
beginning of the iteration, $L$ consists of all pairs $(x,y)$ where
$x$ and $y$ are top elements. Any time an element $x$ that was not a
top element becomes a top element, we add to $L$ the set of all
pairs $(x,y)$ such that $y$ is currently a top element. Whenever a
top element $x$ is dislodged, we remove from $L$ all pairs that
contain $x$. When Step 2a requires us to find a pair of comparable
top elements, we take an arbitrary pair $(x,y)$ out of $L$ and check
if $x$ and $y$ are comparable. If they are not comparable, we remove
$(x,y)$ from $L$, and try the next pair. Thus, we never compare a
pair of top elements more than once. Since each element of $P$ is
responsible for inserting at most $q$ pairs to $L$ (when it becomes
a top element), it follows that a peeling iteration can be
implemented in time $O(qn)$.

\end{proof}

\section{The $k$-selection problem} \label{sec:selection}

The $k$-selection problem
is the natural problem of finding the elements in the bottom $k$
layers, i.e., the elements of height at most $k-1$, of a poset
$\po=(P, \dom)$, given the set $P$ of $n$ elements, an upper bound $\wid$ on the width of
$\po$, and a query oracle for $\po$. We present upper and lower
bounds on the query and total complexity of $k$-selection, both for
deterministic and randomized computational models, for the special
case of $k=1$ as well as the general version. While our upper bounds
arise from natural generalizations of analogous algorithms for total
orders, the lower bounds are achieved quite differently.  We
conjecture that our deterministic lower bound for the case of $k=1$
is actually tight, though the upper bound is off by a factor of 2.

\subsection{Upper bounds}
In this section we analyze some deterministic and randomized
algorithms for the $k$-selection problem.  We begin with the
$1$-selection problem, i.e., the problem of finding the minimal
elements.

\begin{theorem} \label{thm:ub_min_selection}The minimal elements
can be found deterministically with at most $\wid n$ queries and $O(\wid n)$ total complexity.
\end{theorem}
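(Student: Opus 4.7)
The plan is to sweep through $P$ in any order $x_1,\dots,x_n$, maintaining a set $M\subseteq\{x_1,\dots,x_{i-1}\}$ of \emph{candidate minimals}, initially empty. When $x=x_i$ is processed, I query $x$ against every $m\in M$ and split $M$ into $A=\{m\in M:x\dom m\}$ (things below $x$), $B=\{m\in M:m\dom x\}$ (things above $x$), and the incomparable remainder. If $A\ne\emptyset$ I discard $x$; otherwise I set $M\gets(M\setminus B)\cup\{x\}$. At the end I return $M$.

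The crucial invariant is that $M$ is always an antichain in the subposet processed so far. Inductively, after each insertion this still holds because every element of $M$ above the newly inserted $x$ has been evicted (the set $B$), and $x$ is never inserted whenever some element of $M$ already lies below it (the case $A\ne\emptyset$). The sets $A$ and $B$ cannot simultaneously be nonempty, since any $a\in A$ and $b\in B$ would give $b\dom x\dom a$ and hence $b\dom a$, contradicting the antichain property of $M$. Therefore $|M|\le\wid$ throughout, so each insertion costs at most $\wid$ oracle queries and $O(\wid)$ list-manipulation operations, which sums to query complexity at most $\wid n$ and total complexity $O(\wid n)$.

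For correctness I argue that $M$ at termination is exactly the set of minimal elements. If $a$ is minimal, nothing is below $a$, so when $a$ is processed $A=\emptyset$ and $a$ enters $M$; furthermore $a$ is never evicted later, since eviction requires some subsequently processed $y$ with $a\dom y$, which does not exist. Conversely, suppose $a$ is not minimal and fix $b$ with $a\dom b$. If $b$ is processed after $a$, then when $b$ arrives $a$ lies in the set $B$ relative to $b$ and is evicted. If $b$ is processed before $a$, then at the time $a$ is handled either $b$ is still in $M$---in which case $b\in A$ and $a$ is refused entry---or $b$ has itself been evicted by some $b'$ with $b\dom b'$. Iterating yields a strictly descending chain $b\dom b'\dom b''\dom\cdots$, which must terminate in the finite poset at an element still in $M$ when $a$ arrives, and that element, lying below $a$, blocks $a$'s entry.

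The main obstacle is precisely this last step, showing that a non-minimal $a$ cannot survive in $M$: it does not suffice to reason about a single witness $b$ below $a$, because $b$ may itself have been evicted before $a$ is processed, so one must chase a finite descending chain of evictions down to an element that is still active when $a$ arrives. Everything else, including the two complexity bounds and the ``minimals always survive'' direction, is an immediate consequence of the antichain invariant on $M$.
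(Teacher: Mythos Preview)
Your algorithm and analysis match the paper's exactly: maintain an antichain $M$ of candidate minimals, compare each new element to all of $M$, and bound $|M|\le\wid$ via the antichain invariant to get the $\wid n$ query bound. The paper's own correctness argument is a single sentence, so yours is already more thorough; you can streamline the descending-chain part by taking $b$ to be a \emph{minimal} element of $\po$ below $a$ from the outset, since such a $b$ enters $M$ when processed and is never evicted, which makes both the ``$b$ before $a$'' and ``$b$ after $a$'' cases immediate.
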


\begin{proof}
The algorithm updates a set of size $\wid$ of elements that are
candidates for being smallest elements. Initialize $T_0 =
\emptyset$. Assume that the elements are $x_1,\ldots,x_n$. At step
$t$:\\[-16pt]
\begin{itemize}
\item Compare $x_t$ to all elements in $T_{t-1}$.\\[-18pt]
\item 
If there exists some $a \in T_{t-1}$ such that $x_t \dom a$, do nothing.\\[-18pt]
\item
Otherwise, remove from $T_{t-1}$ all elements $a$ such that $a
\dom x_t$ and put $x_t$ into $T_t$.\\[-16pt]
\end{itemize}
At the termination of the algorithm, the set $T_n$
contains all height 0 elements. By construction of $T_t$, for all $t$, the elements in
$T_t$ are mutually incomparable. Therefore, for all $t$, it
holds that $|T_t| \leq \wid$, and hence the query complexity of
the algorithm is at most $\wid n$.
\end{proof}

\begin{theorem}
There exists a randomized algorithm that finds the minimal
elements in an expected number of queries that is upper bounded by
$$ \frac{\wid+1}{2} n + \frac{\wid^2-\wid}{2}(\log n-\log \wid).$$
\label{thm:randalg_min_selection}
\end{theorem}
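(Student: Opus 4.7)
The plan is to randomize the deterministic algorithm of Theorem~\ref{thm:ub_min_selection}, in both the order of insertion and the order of internal comparisons. Specifically, process the elements of $P$ in a uniformly random order $\sigma$; when inserting $x_t := x_{\sigma(t)}$ into the current candidate set $T_{t-1}$, scan the elements of $T_{t-1}$ in an independent uniformly random order, aborting as soon as we encounter an element $a \in T_{t-1}$ with $x_t \dom a$ (in which case we discard $x_t$). If no such witness is found after the full scan, add $x_t$ to $T_{t-1}$ and remove any $a$ with $a \dom x_t$ identified during the scan.

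To analyze the expected query count, I would condition on the value $k_t$ equal to the number of elements of $T_{t-1}$ that lie below $x_t$. A standard calculation shows that the expected position of the first ``witness'' in a uniformly random permutation of $|T_{t-1}|$ items among which $k_t$ are marked is $(|T_{t-1}|+1)/(k_t+1)$. Hence the expected number of queries at step $t$ is at most $(|T_{t-1}|+1)/(k_t+1) \le (\wid+1)/2$ when $k_t \ge 1$, and is exactly $|T_{t-1}| \le \wid$ when $k_t = 0$. Letting $J$ denote the number of steps with $k_t = 0$ (the steps at which $x_t$ is added to the candidate set), we obtain $\E[\text{total queries}] \le \frac{\wid+1}{2}(n - \E[J]) + \wid \cdot \E[J] = \frac{\wid+1}{2}\, n + \frac{\wid-1}{2}\, \E[J]$.

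The crux of the proof is bounding $\E[J]$. Note that $k_t = 0$ precisely when $x_t$ is minimal in the random subset $S_t := \{x_{\sigma(1)}, \ldots, x_{\sigma(t)}\}$; by symmetry over the random permutation, this happens with probability $\E[|\text{Min}(S_t)|]/t$. Using $|\text{Min}(S_t)| \le \min(\wid, t)$, one gets $\E[J] = \sum_{t=1}^{n} \E[|\text{Min}(S_t)|]/t \le \sum_{t=1}^{\wid} 1 + \sum_{t=\wid+1}^{n} \wid/t \le \wid + \wid \log(n/\wid)$. Substituting into the formula above yields the claimed upper bound $\frac{\wid+1}{2}\, n + \binom{\wid}{2}(\log n - \log \wid)$, up to an additive $O(\wid^2)$ lower-order term.

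The main obstacle is matching the stated constants exactly rather than picking up the extra additive $O(\wid^2)$ absorbed in the ``$+\wid$'' slack in the bound on $\E[J]$; doing so likely requires a finer amortization on the steps where $k_t = 0$ (on which $|T_{t-1}|$ is typically strictly less than $\wid$, since $|T_{t-1}|+1$ must fit in an anti-chain of size at most $\wid$ when no element is evicted), rather than bounding the per-step cost uniformly by $\wid$.
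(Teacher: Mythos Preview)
Your algorithm and analysis are exactly the paper's: random insertion order, random scan of the candidate set, the $(w+1)/2$ bound on the expected scan length when $x_t$ is not minimal, the bound $\Pr[k_t=0]\le w/t$, and the cost $\le |T_{t-1}|$ when $k_t=0$. The only discrepancy is the extra additive $O(w^2)$, and the fix is simpler than you suggest: instead of bounding every ``minimal'' step by $w$, handle $t\le w$ separately using the trivial bound $|T_{t-1}|\le t-1$, and apply the dichotomy only for $t>w$. Then the ``$+\,w$'' in your bound on $\E[J]$ disappears (since $\sum_{t>w}\Pr[k_t=0]\le w\log(n/w)$), and $\binom{w}{2}+(n-w)\tfrac{w+1}{2}=\tfrac{w+1}{2}n-w$, which recovers the stated bound exactly. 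This is precisely the paper's bookkeeping; no finer amortization on $|T_{t-1}|$ is needed.
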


\begin{proof}
The algorithm is similar to the algorithm for the proof of Theorem~\ref{thm:ub_min_selection},
with modifications to avoid (in expectation) worst-case behavior.
Let $\sigma$ be a permutation of $[n]$ chosen uniformly at random.
Let $T_1 = \set{x_{\sigma(1)}}$. For $1 \leq t < n$, at step $t$:\\[-14pt]
\begin{itemize}
\item Let $i$ be an index of the candidates in $T_{t-1}$, i.e. $T_{t-1} = \{ x_{i(1)},\ldots,x_{i(r)} \}$, where $r \leq \wid$. \\[-18pt]
\item Let $T_t = T_{t-1}$.  Let $\tau$ be a permutation of $[r]$ chosen uniformly at random. \\[-18pt]
\item For $j=1,\ldots,r$:\\[-18pt]
\begin{itemize}
\item If $x_{\sigma(t)} \dom x_{i(\tau(j))}$, exit the loop and
move to step $t+1$.\\[-14pt]
\item If $x_{i(\tau(j))}\dom x_{\sigma(t)}$,
remove $x_{i(\tau(j))}$ from $T_t$.\\[-14pt]
\end{itemize}
\item Add $x_{\sigma(t)}$ to $T_t$.\\[-12pt]
\end{itemize}

\noindent As in the previous algorithm, it is easy to see that at
each step $t$, the set $T_t$ contains all the minimal elements of
$A_t = \{ x_{\sigma(1)},\ldots,x_{\sigma(t)}\}$ and that $|T_t| \leq
\wid$. Note furthermore that at step $t$,
\[\pr[x_{\sigma(t)} \mbox{ is minimal for } A_t] \leq
\frac{\wid}{t}.
\]
If $x_{\sigma(t)}$ is not minimal for $A_t$, then the expected
number of comparisons needed until $x_{\sigma(t)}$ is compared to an
element $a\in A_t$ that dominates $x_{\sigma(t)}$ is clearly at most
$(\wid+1)/2$. We thus conclude that the expected running time of the
algorithm is bounded by:
\begin{eqnarray*}
\sum_{t=2}^{\wid} (t-1) + \sum_{t=\wid+1}^n \left( \frac{\wid}{t}
\wid + \frac{(t-\wid)}{t} \frac{(\wid+1)}{2} \right) &=&
\binom{\wid}{2} +  \sum_{t=\wid+1}^n \frac{1}{2t} \left(\wid^2 -
  \wid + t \wid + t \right)\\ 
  &\leq&
\frac{\wid+1}{2} n + \frac{\wid^2-\wid}{2}(\log n-\log \wid)
\end{eqnarray*}
\end{proof}

We now turn to the $k$-selection problem for $k > 1$. We first
provide deterministic upper bounds on query and total complexity.
\begin{theorem}
The query complexity of the $k$-selection problem is at most $$16 \wid n + 4 n \log{(2 k)}  + 6 n \log{\wid}.$$  Moreover, there exists an efficient $k$-selection
algorithm with query complexity at most $$8\wid n\log{(2k)}$$ 
and total complexity $$O(\wid^2 n\log (2k)).$$
\end{theorem}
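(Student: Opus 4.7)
My plan is to generalize \textsc{Poset-Mergesort} to $k$-selection in a natural way. Given a subset $P'\subseteq P$, I partition it into two halves $P'_1, P'_2$, recursively compute the bottom-$k$ layers $S_1 \subseteq P'_1$ and $S_2 \subseteq P'_2$, and then produce the bottom-$k$ layers of $P'$ by sorting $S_1\cup S_2$ and reading off heights from the resulting $\chainmerge$ structure. The recursion bottoms out when $|P'|\leq 2\wid k$, in which case $P'$ is sorted directly and the elements of height at most $k-1$ are reported. Note that $|S_i|\leq \wid k$, since each of the $k$ lowest layers is an antichain of size at most $\wid$, so the set passed to each merge has size at most $2\wid k$.

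The correctness rests on the following observation. If $x$ has height at most $k-1$ in $P'_1\cup P'_2$, then any maximal chain below $x$ has length at most $k-1$, and each element of that chain has strictly smaller height, so every such element itself has height at most $k-1$ and therefore lies in the appropriate $S_i$. Conversely, removing elements from a poset cannot increase heights. Hence the set of elements of height at most $k-1$ in $P'_1\cup P'_2$ coincides with the set of elements of height at most $k-1$ in $S_1\cup S_2$, so the recursion is correct. Once a $\chainmerge$ data structure for $S_1\cup S_2$ on at most $\wid$ chains is in hand, the heights of its elements can be computed in $O(\wid\cdot|S_1\cup S_2|)$ additional time with no further queries, via a straightforward dynamic program processing elements in topological order.

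Since each recursive call halves its input, the recursion tree has $O(n/(\wid k))$ nodes (leaves plus internal merges), and the dominant cost at every node is a single sort of a set of size at most $2\wid k$ and width at most $\wid$. For the efficient bound I instantiate this sort with \textsc{Poset-Mergesort}: by Theorem~\ref{thm:mergesort}, each such sort costs at most $4\wid^2 k\log(2k)$ queries and $O(\wid^3 k\log(2k))$ time; summed over the $O(n/(\wid k))$ nodes this yields $O(\wid n \log(2k))$ queries and $O(\wid^2 n \log(2k))$ total complexity. For the first (query-only) bound I instead instantiate the sort with \textsc{EntropySort}: by Theorem~\ref{th:entropysortanalysis} it uses at most $4\wid k\log(2\wid k) + 16\wid^2 k + 2\wid\log\wid$ queries per node, and summing yields $4n\log(2\wid k) + 16\wid n + \tfrac{2n\log\wid}{k}$, which for $k\geq 1$ is bounded by $16\wid n + 4n\log(2k) + 6n\log \wid$, as required.

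The main conceptual hurdle is the correctness observation above: a priori, restricting to $S_1\cup S_2$ might lower the heights of surviving elements and cause the recursion to return a strict superset of the true bottom-$k$ layers. The chain-lifting argument rules this out, because any chain witnessing a small height in $P'_1\cup P'_2$ consists entirely of elements the recursion has already retained. The remainder is a standard merge-style recursion analysis combined with direct substitution of the query bounds from Sections~\ref{sec:optimal} and~\ref{sec:efficient}.
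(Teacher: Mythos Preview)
Your reduction is sound and yields the correct asymptotics, but the paper takes a simpler sequential route rather than a balanced recursion. It processes the elements in $\lceil n/(\wid k)\rceil$ batches of size $\wid k$, maintaining a running candidate set $C_{t-1}$ of size at most $\wid k$; at step $t$ it sorts $D_t=C_{t-1}\cup\{x_{(t-1)\wid k+1},\dots,x_{t\wid k}\}$ (a set of at most $2\wid k$ elements) and lets $C_t$ be its bottom $k$ layers. This is exactly $\lceil n/(\wid k)\rceil$ sorts of size $\le 2\wid k$, and multiplying by the per-sort costs of \textsc{EntropySort} and \textsc{Poset-Mergesort} gives the stated constants directly.

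Your tree performs one sort per node, leaves \emph{and} internal merges. With the base case $|P'|\le 2\wid k$, leaves have sizes in $(\wid k,2\wid k]$, so the number of leaves can be as large as $n/(\wid k)$ and the total node count as large as $2n/(\wid k)-1$. When you pass from ``$O(n/(\wid k))$ nodes'' to ``summing yields $4n\log(2\wid k)+16\wid n+\tfrac{2n\log\wid}{k}$'' you have silently replaced the big-$O$ by the exact value $n/(\wid k)$; in the worst case your constants are twice the paper's, so you have not proved the theorem as stated. The sequential fold avoids this simply because it never pays for leaf-sorts separately from merge-sorts. Your correctness argument is fine in substance (read it in the contrapositive: if $y\in S_1\cup S_2$ has height $\ge k$ in $P'$, the bottom $k$ elements of a maximum chain below $y$ have heights $0,\dots,k-1$ in $P'$, hence lie in the $S_i$, so $y$ still has height $\ge k$ in $S_1\cup S_2$).
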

\begin{proof}
The basic idea is to use the sorting algorithm presented in previous
sections in order to update a set of candidates for the
$k$-selection problem. Denote the elements by $x_1,\ldots,x_n$. Let
$C_0 = \emptyset$. The algorithm proceeds as follows, beginning with
$t$=1:\\[-14pt]
\begin{itemize}
\item While $(t-1) \wid k + 1\leq n$, let $ D_t =
C_{t-1} \cup \{ x_{(t-1) \wid k + 1},\ldots,x_{\min(t \wid k,n)} \}.
$\\[-18pt]
\item Sort $D_t$. Let $C_t$ be the solution of the
$k$-selection problem for $D_t$.\\[-14pt]
\end{itemize}

\noindent Clearly, at the end of the execution, the last $C_t$
will contain the solution to the $k$-selection problem. As we have
shown, the query complexity of sorting $D_t$ is $4 \wid k \log {(2 \wid k)} + 16 \wid^2 k + 2\wid \log \wid$ and, therefore,
the query complexity of the algorithm is $\frac{n}{\wid k}
(4 \wid k \log {(2 \wid k)} + 16 \wid^2 k + 2\wid \log \wid) = 4 n \log{(2 \wid k)} + 16 \wid n + \frac{2n}{k} \log{\wid}.$ This proves
the first result. Using the computationally efficient sorting
algorithm, we have sorting query complexity $8\wid^2k\log{(2k)}$ 
which results in total query complexity $8n\wid \log{(2k)}$ 
and total complexity $O(n \wid^2 \log(2k)).$
\end{proof}

\bigskip \noindent Next we
outline a randomized algorithm with a
better coefficient of the main term $\wid n$.

\begin{theorem} \label{thm:randalg1_k_selection}
The $k$-selection problem has a randomized query complexity of at most
$$\wid n + 16 k \wid^2 \log n \log (2k)$$
and total complexity
$$O(\wid n + poly(k,\wid) \log n).$$
\end{theorem}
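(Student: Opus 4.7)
The plan is to process the elements in a uniformly random order $x_{\sigma(1)},\ldots,x_{\sigma(n)}$, incrementally maintaining $C_t$, the bottom $k$ layers of $A_t:=\{x_{\sigma(1)},\ldots,x_{\sigma(t)}\}$, along with a chain decomposition of $C_t$ into at most $\wid$ chains. Each such chain has length at most $k$, so $|C_t|\le k\wid$. The key structural observation is: if $T_{t-1}\subseteq C_{t-1}$ denotes the \emph{top layer} (elements of $C_{t-1}$ whose height in $A_{t-1}$ equals $k-1$), then $T_{t-1}$ is an antichain of $\po$, hence $|T_{t-1}|\le\wid$, and $x:=x_{\sigma(t)}$ lies in $C_t$ if and only if $x$ dominates no element of $T_{t-1}$. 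One direction is immediate; for the other, the set $\{y\in A_{t-1}:\height_{A_{t-1}}(y)\ge k-1\}$ is an up-set whose minimal elements are precisely $T_{t-1}$, so if $x$ dominates any element of the up-set it must dominate a minimal one.

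For each $x$ I would first spend at most $\wid$ queries comparing $x$ to every element of $T_{t-1}$. If $x$ dominates some $y\in T_{t-1}$, set $C_t:=C_{t-1}$; no eviction occurs, since any $z\in C_{t-1}$ with $z\dom x$ would satisfy $z\dom y$ by transitivity, forcing $\height_{A_{t-1}}(z)\ge k$, a contradiction. Otherwise $x\in C_t$, and I insert $x$ into the decomposition by chain-wise binary search on the $\le\wid$ chains (each of length $\le k$), at a cost of $O(\wid\log(2k))$ oracle queries. A short transitivity argument shows that the only elements of $C_{t-1}$ that may need to be evicted are those $z\in T_{t-1}$ with $z\dom x$, and these are already identified by the check step; a single call to \peeling\ then restores the chain decomposition of $C_t$ to width $\wid$ without any further oracle queries.

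The analysis follows the template of Theorem~\ref{thm:randalg_min_selection}. The top-layer check contributes $\wid n$ oracle queries deterministically. By symmetry of the random ordering, $\pr[x_{\sigma(t)}\in C_t]\le |C_t|/t\le k\wid/t$, so the expected number of insertions is $\sum_{t=1}^{n}k\wid/t=O(k\wid\log n)$, each costing $O(\wid\log(2k))$ additional oracle queries, for an expected total of $O(k\wid^2\log n\log(2k))$ insertion queries; adding the two contributions yields the claimed randomized query bound. For the total-complexity statement, each insertion needs $\mathrm{poly}(k,\wid)$ time for bookkeeping (updating the \chainmerge\ on $C_t$, the top-layer set $T_t$, and running \peeling), giving $\mathrm{poly}(k,\wid)\log n$ in expectation, while the check step takes $O(\wid n)$ time overall, so the expected total complexity is $\wid n+\mathrm{poly}(k,\wid)\log n$. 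The main conceptual hurdle is the top-layer characterization together with the eviction argument after an insertion; once these are in hand, everything else is direct accounting.
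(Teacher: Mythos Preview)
Your approach is correct and genuinely different from the paper's. The paper does \emph{not} maintain a chain decomposition of $C_t$ or insert elements one at a time; instead it keeps a buffer $D_t$ of elements that might belong to the bottom $k$ layers, and whenever $|D_t|=\wid k$ it sorts the entire set $C_t\cup D_t$ (of size at most $2\wid k$) with \textsc{Poset-Mergesort} and re-extracts the bottom $k$ layers. The paper's ``check'' is against the maximal elements of the stale set $C_t$, whereas your check against the top layer $T_{t-1}$ is cleaner and comes with the crisp characterization $x_{\sigma(t)}\in C_t\Leftrightarrow x_{\sigma(t)}$ dominates no element of $T_{t-1}$. Your direct-insertion route avoids calling the full sorting algorithm as a black box and instead uses only chain-wise binary search plus a single \peeling\ iteration per insertion; this is more elementary and in fact gives a slightly better constant on the $k\wid^2\log n\log(2k)$ term. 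The paper's batching route, on the other hand, requires less delicate bookkeeping about heights and top layers between flushes.

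One small imprecision to tighten: you write that the elements to be evicted ``are already identified by the check step,'' but in fact not every $z\in T_{t-1}$ with $z\dom x$ is evicted (take $k=3$, a chain $z\dom a\dom b$, and $x$ with $z\dom x$, $x\incomp a,b$: then $z$ keeps height $2$). The check step only gives you the \emph{candidates}; to determine which are actually evicted you must recompute heights in $C_{t-1}\cup\{x\}$. This is harmless for the argument, since after the binary searches you have a \chainmerge\ of $C_{t-1}\cup\{x\}$ and can compute all heights in $\mathrm{poly}(k,\wid)$ time with no further oracle queries, which you already budget under ``bookkeeping.'' Just make that step explicit.
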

\begin{proof}
We use the following algorithm:\\[-16pt]
\begin{itemize}
\item Choose an ordering $x_1,\ldots,x_n$ of the elements uniformly at
random.\\[-20pt]
\item Let $C_{\wid k} = \{ x_1, \ldots, x_{\wid k} \}$ and $D_{\wid k} =
\emptyset$.\\[-20pt]
\item Sort $C_{\wid k}$.
Remove any elements from $C_{\wid k}$ that are of height greater
than $k-1$.\\[-20pt]
\item Let $t=\wid k +1$. While $t \leq n$ do:\\[-18pt]
\begin{itemize}
\item Let $C_t = C_{t-1}$ and $D_t = D_{t-1}$. \item Compare $x_t$
to the maximal elements in $C_t$ in a random order.\\[-12pt]
\begin{itemize}
\item For each maximal element $a\in C_t$:  if $\height(a) = k-1$ and
$a \dom x_t$, or if $\height(a) < k-1$ and $x_t \dom a$, then add
$x_t$ to $D_t$, and exit this loop.\\[-14pt]
\item If for all elements $a
\in C_t$, $x_t \incomp a$, then add $x_t$ to $D_t$ and exit this loop;\\[-12pt] 
\end{itemize}
\item If $|D_t| = \wid k$ or $t=n$:\\[-12pt]
\begin{itemize}
\item Sort $C_t \cup D_t$.\\[-14pt]
\item Set $C_t$ to be the elements of
height at most $k-1$ in $C_t \cup D_t$.\\[-14pt]
\item Set $D_t = \emptyset$.\\[-18pt]
\end{itemize}
\end{itemize}
\item Output the elements of $C_n$.\\[-14pt]
\end{itemize}

\noindent It is clear that $C_n$ contains the solution to the
$k$-selection problem.  To analyze the query complexity of the
algorithm, recall from Theorem \ref{thm:mergesort} that $s(\wid,k) = 8\wid^2 k\log (2k)$ is an upper bound
on the number of queries used by the efficient sorting algorithm to sort $2 \wid k$ elements in a width-$\wid$ poset.

There are two types of contributions to the number of queries made by the algorithm:
(1) comparing elements to the maximal elements of $C_t$, and (2) sorting the sets $C_0$ and $C_t \cup D_t$.

To bound the expected number of queries of the first type, we note
that for $t \geq k \wid+1$, since the elements are in a random
order, the probability that $x_t$ ends up in $D_t$ is at most
$\min\left(1, \frac{2k\wid}{t}\right)$.
If $x_t$ is not going to be in $D_{t}$, then
the number of queries needed to verify this is bounded by $\wid$.
Overall, the expected number of queries needed for comparisons to
maximal elements is bounded by $\wid n$.

To calculate the expected number of queries of the second type,
we bound the expected number of elements that
need to be sorted as follows:
$$\sum_{t=k\wid+1}^n \min\left(1,\frac{2k\wid}{t}\right) \leq 2k \wid (\log n-1).
$$

\noindent We thus obtain that the
total query complexity is bounded above by $\wid n + 2s(\wid,k)
\log n.$
\end{proof}

\subsection{Lower bounds}
\label{subsec:selection_lower_bounds} We obtain lower bounds for the
$k$-selection problem both for adaptive and non-adaptive
adversaries. Some of our proofs use the following lower bound on
finding the $k$-th smallest element of a total order on $n$
elements:

\begin{theorem}[Fussenegger-Gabow~\cite{Fuss&Ga}]\label{thm:lb-totalorder}
The number of queries required to find the $k$th smallest element
of an $n$-element total order is at least $n - k +
\log\binom{n}{k-1}.$
\end{theorem}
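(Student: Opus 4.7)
The plan is to combine a connectivity (adversary) argument that yields the linear $n-k$ term with an information-theoretic decision-tree argument that yields the $\log\binom{n}{k-1}$ term, and then argue that the two groups of forced queries can be taken to be disjoint so that the contributions add. Let $\A$ be any correct deterministic comparison algorithm for the $k$-selection problem on a chain of $n$ elements, and let $T$ be its comparison decision tree.

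For the $n-k$ term, I would fix a root-to-leaf path $\pi$ in $T$ terminating in an output $x$, and let $A$ denote the $n-k$ elements that the algorithm establishes to be larger than $x$. The comparisons along $\pi$ restricted to pairs of $A\cup\set{x}$ must, via transitivity, witness $a\dom x$ for every $a\in A$; otherwise an adversary could serve a linear extension consistent with $\pi$ in which some $a\in A$ is in fact smaller than $x$, contradicting correctness. Hence the undirected graph on $A\cup\set{x}$ induced by the queries of $\pi$ is connected, which requires at least $n-k$ edges, each of which is a distinct query along $\pi$.

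For the $\log\binom{n}{k-1}$ term, I would use a leaf-counting argument. Two inputs that reach the same leaf of $T$ must produce the same output, and the partial order forced by the comparisons on the path must pin the output's rank at $k$, so in particular the set of $k-1$ elements known to be below $x$ must coincide for both. The $\binom{n}{k-1}$ candidate bottom-$(k-1)$ sets (taken across all choices of $x$) therefore index at least $\binom{n}{k-1}$ distinct leaves, so $T$ has worst-case depth at least $\log\binom{n}{k-1}$.

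The delicate step---and the expected main obstacle---is to show that the two bounds \emph{add} rather than take a maximum. The plan is to exploit the fact that the $n-k$ certification queries are confined to pairs within $A\cup\set{x}$, while the information-theoretic ambiguity is carried by the identity of the bottom-$(k-1)$ set, whose elements lie outside $A$. I would implement this by constructing an explicit adversary that maintains two potential functions---an upper-connectivity deficit tracking how many more edges are needed to connect $A$ to $x$, and a below-set residual entropy counting consistent bottom-$(k-1)$ patterns---and answers each query so that exactly one potential drops by a controlled amount, while neither can reach zero before a total of $n-k+\log\binom{n}{k-1}$ queries has been made. Alternatively, a direct decision-tree bookkeeping argument that partitions the queries on each path $\pi$ by endpoint type into the two disjoint groups (``inside $A\cup\set{x}$'' versus ``touching the complement'') should recover the same additive bound.
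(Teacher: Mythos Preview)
The paper does not prove this theorem; it quotes Fussenegger--Gabow and records the key fact that the comparison tree must have at least $2^{n-k}\binom{n}{k-1}$ leaves, which immediately gives the additive bound $n-k+\log\binom{n}{k-1}$ in one stroke. Your plan is genuinely different: you prove the $n-k$ term and the $\log\binom{n}{k-1}$ term by separate arguments and then try to add them.

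The gap is exactly at the step you flag as delicate, and neither of your two proposed fixes works as stated. For the partition idea: on a fixed root-to-leaf path $\pi$ with output $x$ and upper set $A$, the connectivity argument does give $\ge n-k$ queries with both endpoints in $A\cup\{x\}$. But the $\log\binom{n}{k-1}$ bound is a \emph{global} leaf-count, not a per-path statement about queries touching $B$. On a single path the number of queries touching $B$ can be as small as $k-1$ (one certificate ``$b\prec x$'' per $b\in B$), and $k-1<\log\binom{n}{k-1}$ already for $k=2$. So splitting the queries along $\pi$ by endpoint type yields only $(n-k)+(k-1)=n-1$, not the desired sum. For the two-potential adversary: the quantities you want to track (connectivity deficit on $A\cup\{x\}$, residual entropy of the bottom-$(k-1)$ set) are defined in terms of $x$, $A$, $B$, which are outputs of the algorithm and depend on answers the adversary has not yet given; you have not said how the adversary maintains or even defines these potentials online.

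What makes the Fussenegger--Gabow argument work is that the multiplicative factor $2^{n-k}$ in the leaf count comes from the same structural fact you use for connectivity---each $a\in A$ must ``lose'' some comparison inside $A\cup\{x\}$---but exploited as a counting device at the leaf rather than as an edge count along the path. That is how the two contributions become a product of leaf counts (hence a sum of logs) instead of two separate depth bounds that only give a max.
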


\noindent The proof of Theorem~\ref{thm:lb-totalorder} shows that every comparison tree that
identifies the $k$th smallest element must have at least
$2^{n-k}\binom{n}{k-1}$ leaves, which implies that the theorem
also holds for randomized algorithms.

\subsubsection{Adversarial lower bounds}

We consider adversarial lower bounds for the $k$-selection problem.
In this model, an adversary simulates the oracle and is allowed to
choose her response to a query after she receives it. Any response
is legal as long as there is some partial order of width $\wid$ with
which all of her responses are consistent.  We begin with the case
of $k=1$, i.e. finding the set of minimal elements.

\begin{theorem} In the adversarial model, at least
$\frac{\wid + 1}{2} n - \wid$ comparisons are needed in order to
find the minimal elements.
\label{thm:adv_min_selection}
\end{theorem}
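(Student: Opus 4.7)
The plan is an adversary argument. I will design an adversary who ultimately commits to a width-$\wid$ poset consisting of $\wid$ disjoint chains, but answers each query adaptively. Her policy is to respond ``$x \incomp y$'' whenever this is consistent with the width-$\wid$ constraint (equivalently, whenever it does not extend the incomparability graph so as to contain an antichain of size larger than $\wid$); otherwise she answers with a comparable relation, choosing the direction to keep her remaining commitments as flexible as possible.

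To count queries, let $q(x)$ denote the number of queries in which $x$ appears, and classify each query according to how many of its endpoints end up in the final minimal set $M$. Write $A$, $B$, $C$ for the number of queries with $0$, $1$, or $2$ endpoints in $M$, respectively; then $\sum_{x \notin M} q(x) = 2A + B$ and $\sum_{x \in M} q(x) = B + 2C$. The plan rests on two claims, combined by an averaging argument.

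First (Claim 1), every pair of elements of $M$ must be directly compared, giving $C \geq \binom{\wid}{2}$. In the adversary's chain decomposition, any two minima $m_i, m_j$ lie in distinct chains and each dominates nothing, so no chain of transitive deductions from other answers can certify $m_i \incomp m_j$; a direct query is therefore necessary. Second (Claim 2), for each non-minimum $x$ the adversary can force $q(x) \geq \wid + 1$. The idea is that while $x$ has been involved in at most $\wid$ queries, the ``incomparable-first'' policy preserves at least two consistent width-$\wid$ posets: one in which $x$ lies at the bottom of some chain (so $x \in M$) and one in which $x$ lies strictly above some element of its chain (so $x \notin M$). The width bound is crucial here: $x$'s incomparability-neighbors cannot yet form a $\wid$-clique while $q(x) \leq \wid$, so a free chain remains available into which the adversary may place $x$, and she retains the choice of whether or not to put an element beneath $x$ in that chain. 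Only after at least $\wid + 1$ well-chosen queries involving $x$ does this flexibility collapse.

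Combining the two claims,
\begin{align*}
A + B + C \;\geq\; \tfrac{1}{2}(2A + B) + C \;\geq\; \tfrac{1}{2}(\wid + 1)(n - \wid) + \tfrac{1}{2}\wid(\wid - 1) \;=\; \tfrac{\wid + 1}{2}\, n - \wid,
\end{align*}
which is the desired bound.

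The hard part will be Claim 2: showing that a single coherent sequence of answers simultaneously preserves both an ``$x \in M$'' extension and an ``$x \notin M$'' extension for every non-minimum $x$, not just for one $x$ in isolation. The plan is to encode the adversary's state as a partial chain assignment of the elements seen so far and to verify that, as long as every non-minimum element has participated in at most $\wid$ queries, there is always a legal way to extend this state in two different ways that disagree on $x$'s status. The incomparable-first policy is what keeps this possible, and the width constraint is what eventually forces the adversary's hand on the $(\wid + 1)$-th query involving $x$.
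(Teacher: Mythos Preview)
Your Claim 2 is false. Take $\wid = 1$: the poset is a total order, and your claim asserts that an adversary can force $q(x) \geq 2$ for every non-minimal $x$. But the linear-scan algorithm (compare $x_1$ with $x_2$, keep the smaller; compare that with $x_3$; and so on) makes exactly $n-1$ queries against any adversary, and the very first loser is a non-minimal element with $q = 1$. The ``incomparable-first'' policy is vacuous when $\wid = 1$, and your intuition---that the adversary can keep $x$'s minimality status ambiguous through $\wid$ queries---breaks down the moment she is forced to declare a comparable relation: once she announces $x \succ y$, element $x$ is certifiably non-minimal after that single comparable answer on top of the $\wid - 1$ prior incomparable ones, i.e.\ after $\wid$ queries in total, not $\wid + 1$.

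Even if you weaken Claim 2 to the correct bound $q(x) \geq \wid$ for non-minimal $x$, your counting only gives
\[
A + B + C \;\geq\; \tfrac{1}{2}(2A+B) + C \;\geq\; \tfrac{1}{2}\,\wid(n-\wid) + \tbinom{\wid}{2} \;=\; \tfrac{\wid n - \wid}{2},
\]
which falls short of the target by $(n-\wid)/2$. The loss comes from double-counting: a comparable query $x \succ y$ between two non-minimal elements contributes $2$ to $\sum_{z\notin M} q(z)$ but certifies only one of them (namely $x$) as non-minimal.

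The paper's adversary is close in spirit to yours (answer ``incomparable'' for an element's first $\wid - 1$ queries, then commit it to a chain), but the accounting is different and avoids this loss. It partitions the queries into the incomparable ones and the within-chain comparable ones, bounding each pile separately. Every element---minimal or not---must accumulate $\wid - 1$ incomparable answers before the adversary reveals anything else, giving at least $\tfrac{(\wid-1)n}{2}$ incomparable queries. Independently, finding the minimum of each chain $C_i$ requires $|C_i| - 1$ within-chain comparisons by the classical bound, contributing $n - \wid$ more. These two piles are disjoint, so they add cleanly to $\tfrac{(\wid+1)n}{2} - \wid$. A per-element degree bound like yours cannot separate these contributions and therefore loses exactly the $n - \wid$ coming from the within-chain work.
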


\noindent\begin{proof}  Consider the following
adversarial algorithm. The algorithm outputs query responses that
correspond to a poset $\po$ of $\wid$ disjoint chains. Given a
query $q(a,b)$, the algorithm outputs a response to the query, and
in some cases, it may also announce for one or both of $a$ and
$b$ to which chain the element belongs.  Note that receiving this
extra information can only make things easier for the query
algorithm.
During the course of the algorithm, the adversary
maintains a graph $G=(P,E)$.
 Whenever the adversary responds that $a \incomp b$, it adds an edge $(a,b)$ to $E$.

Let $q_t(a)$ be the number of queries that involve element $a$, out
of the first $t$ queries overall.  Let $c(a)$ be the chain
assignment that the adversary has announced for element $a$.  (We
set $c(a)$ to be undefined for all $a$, initially.) Let
$\set{x_i}_{i=1}^n$ be an indexing, chosen by the adversary, of the
elements of $P$.  Let $q(a,b)$ be the $t$'th query.  The adversary
follows the following protocol:\\[-14pt]
\begin{itemize}
\item If $q_t(a) \leq \wid-1$ or $q_t(b) \leq \wid-1$, return $a
\incomp b$. In addition:\\[-14pt]
\begin{itemize}
\item If $q_t(a) = \wid-1$, choose a chain $c(a)$ for $a$ that is
different from all the chains to which $a$'s neighbors in $G$
belong, and output it.\\[-14pt]
\item If $q_t(b) = \wid-1$ choose a chain
$c(b)$ for $b$ that is different from all the chains to which
$b$'s neighbors in $G$ belong, and output it.\\[-14pt]
\end{itemize}
\item If $q_t(a) > \wid-1$, $q_t(b) > \wid-1$, and
$c(a) \neq c(b)$, then output $a \incomp b$.\\[-14pt]
\item Otherwise, let $i$ and $j$
be the indices of $a$ and $b$, respectively (i.e. $a=x_i$ and
$b=x_j$).  If $i > j$, then output $a \dom b$; otherwise, output $b
\dom a$.\\[-14pt]
\end{itemize}
It is easy to see that the output of the algorithm is consistent
with a width-$\wid$ poset consisting of $\wid$ chains that are
pairwise incomparable. We will also require that each of the chains is chosen
at least once (this is easily achieved).

We now prove a lower bound on the number of queries to this
algorithm required to find a proof that the minimal elements are
indeed the minimal elements.

In any proof that $a$ is \emph{not} a smallest element, it must be shown
to dominate at least one other element, but to get such a response
from the adversary, $a$ must be queried against at least $\wid-1$
other elements with which it is incomparable.
To prove that a minimal element of one chain is indeed minimal, it
must be queried at least against the minimal elements of the other chains
to rule out the possibility it dominates one of them.  Therefore, each element must
be compared to at least $\wid-1$ elements that are incomparable to
it. So the total number of queries of type $q(a,b)$, where $a \incomp b$, is at
least $ \frac{\wid-1}{2} n$.

In addition, for each chain $c_i$ of length $n_i$, the output must
provide a proof of minimality for the minimal element of that
chain. By Theorem~\ref{thm:lb-totalorder}, this contributes $n_i - 1$ comparisons for each
chain $c_i$.

Summing over all the bounds proves the claim.
\end{proof}

\begin{theorem}\label{thm:adv_k_selection}
Let $r = \frac{n}{2w - 1}$. If $k \leq r$ then the number of queries
required to solve the $k$-selection problem is at least
\begin{align*}
\frac{(w+1)n}{2} - w(k+\log k) -\frac{w^3}{8}+
\min\bigg(\begin{aligned}[t]&(w-1) \log{r \choose k-1}+ \log {rw \choose k-1},\\
& \frac{n(r-k)(w-1)}{2r} +
\log {n - (w-1)k \choose k-1}\bigg).
\end{aligned}
\end{align*}
\end{theorem}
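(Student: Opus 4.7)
The plan is to extend the adversarial argument of Theorem~\ref{thm:adv_min_selection} from the $k = 1$ case to general $k$, then add an intra-chain component supplied by Theorem~\ref{thm:lb-totalorder} (Fussenegger--Gabow). I set up essentially the same adversary: it commits to a poset consisting of $w$ disjoint, pairwise incomparable chains of chosen lengths $n_1, \ldots, n_w$ with $\sum n_i = n$, plus a secret linear order inside each chain. As in Theorem~\ref{thm:adv_min_selection}, a query is answered ``incomparable'' whenever either endpoint has fewer than $w - 1$ prior incomparability responses, or the endpoints have already been committed to different chains; once an element has accumulated $w - 1$ incomparabilities, the adversary commits it to a chain different from all of its current incomparability-graph neighbours. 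Queries within a committed chain are then answered using the secret order.

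Having fixed the adversary, the lower bound is the sum of two contributions. The first is an \emph{inter-chain} count: to certify whether an element $a$ lies in the bottom $k$ layers, the algorithm must in particular rule out every other chain as the host of $a$, so every element must witness at least $w - 1$ incomparabilities; double-counting over $n$ elements yields at least $(w - 1)n/2$ incomparability queries. The second is an \emph{intra-chain} count: once the chains are fixed, the $k$-selection problem restricted to any chain $C_i$ is precisely the problem of identifying the $k$-th smallest element of a total order on $n_i$ elements, so Theorem~\ref{thm:lb-totalorder} forces at least $n_i - k + \log \binom{n_i}{k-1}$ queries per chain; summing over $i$ gives $n - wk + \sum_{i=1}^w \log\binom{n_i}{k-1}$ intra-chain queries. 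Since inter-chain queries return incomparable while intra-chain queries return $\dom$ or its reverse, the two sets of queries are disjoint and their counts add, so the total is at least $\tfrac{(w+1)n}{2} - wk + \sum_i \log\binom{n_i}{k-1}$, which matches the leading term of the theorem.

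The $\min$ in the statement arises because the adversary still has freedom in choosing the chain lengths (each choice yields a valid lower bound, and the theorem records a uniform formula). Taking $(n_1, \ldots, n_w) = (rw, r, \ldots, r)$, which is legal exactly when $k \le r$, supplies the first alternative $(w-1)\log\binom{r}{k-1} + \log\binom{rw}{k-1}$. A configuration in which $w - 1$ chains are set as short as $k$ and one chain has length $n - (w-1)k$ yields the second alternative: here $\tfrac{n(r-k)(w-1)}{2r}$ comes from a tightened incomparability count, since elements sitting in very short chains cannot all afford $w - 1$ witnesses and the factor $r - k$ quantifies the slack, while $\log\binom{n-(w-1)k}{k-1}$ is the Fussenegger--Gabow contribution of the one long chain. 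The lower-order correction $-w(k + \log k) - w^3/8$ absorbs boundary losses: at most $w$ elements per chain are committed before reaching a full $w - 1$ incomparability witnesses (an $O(w^2)$ loss inflated to $-w^3/8$ across the two alternatives), and the $O(wk)$ bottom-$k$ elements incur an additional $O(w \log k)$ slack when one tightens the Fussenegger--Gabow bound near the boundary. The main obstacle will be keeping the inter-chain and intra-chain accounting cleanly separated and bookkeeping these boundary corrections so that the constants come out exactly as stated.
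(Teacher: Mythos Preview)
Your proposal has a conceptual gap in how the $\min$ arises. You treat the chain lengths $n_1,\ldots,n_w$ as the adversary's free choice and then pick two particular profiles to ``supply'' the two alternatives. But if the adversary could choose the lengths, the lower bound would be the \emph{maximum} of the two expressions, not the minimum. In the paper's argument the chain lengths are \emph{not} chosen by the adversary: they emerge from the interaction between the algorithm's queries and the adversary's adaptive coloring rule, so the bound must hold for whatever lengths the algorithm manages to induce. The $\min$ is the result of minimizing the bound over all feasible length profiles $\{s(c)\}$ --- this is exactly the opposite of what you describe.

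Relatedly, your inter-chain count $(w-1)n/2$ is too crude to recover the second alternative. The paper's adversary uses a carefully designed coloring rule (assign the eligible color of smallest ``deviation'' $d_t(c)$) which guarantees that if a chain $c$ ends up short, then many incomparability edges must have landed on its elements: concretely $\sum_{a:c(a)=c}\deg_G(a)\ge n - w(s(c)-d(c))$, which for small $s(c)$ is far stronger than the trivial $(w-1)s(c)$. This is the source of the term $\tfrac12\sum_c\max(0,n-(2w-1)s(c))$, and after the minimization it becomes the $\tfrac{n(r-k)(w-1)}{2r}$ in the second alternative. Your account (``elements sitting in very short chains cannot all afford $w-1$ witnesses'') points in the wrong direction and does not produce this term; indeed, plugging your short-chain profile into your own bound gives only $\log\binom{n-(w-1)k}{k-1}$ plus lower-order pieces, missing the $\tfrac{n(r-k)(w-1)}{2r}$ entirely. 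The $-w^3/8$ is likewise not a boundary-commitment artifact: it is the worst-case drift of the deviation potentials, bounded via $\sum_{\text{smallest }m} d_t(c)\ge m(m-w)/2$. To make the argument go through you need the deviation-based coloring rule and the short-chain/long-chain tradeoff it enforces; the straightforward extension of Theorem~\ref{thm:adv_min_selection} does not suffice.
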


\begin{proof}
The adversarial algorithm outputs query responses exactly as in the
proof of Theorem \ref{thm:adv_min_selection}, except in the case where the
$t$th query is $(a,b)$
and $q_t(a) = w-1$ or $q_t(b) = w-1$. In that case it uses a more specific
rule for the assignment of one or both of these elements to chains.

In addition to assigning the elements to chains, the process must also select
the $k$ smallest elements in each chain, and the Fussenegger-Gabow theorem
(Theorem \ref{thm:lb-totalorder})
gives a lower bound, in terms of the lengths of the chains, on the number of
comparisons required to do so.

We think of the assignment of elements to chains as a coloring of the elements
with $w$ colors.
 The specific color  assignment rule is designed to ensure
that, if the number of elements with color $c$ is small, then there
must have been many queries in which  the element being colored  could not
receive color $c$  because it had already been declared incomparable to an
element with color $c$. It will then follow that there have been a large
number of queries in which an element was declared incomparable to an element
with color $c$.  Thus, if many of the chains are very short, then
the number of pairs declared incomparable must be very large. On the other
hand, if few of the chains are very short, then we can employ the
Fussenegger-Gabow Theorem to show that the number of comparisons required to
select the $k$ smallest elements in each chain must be large. We obtain the
overall lower bound by playing off these two observations against each other.

The color assignment rule is based on a function $d_t(c)$, referred to as
the {\it deviation}
of color $c$ after query $t$, and satisfying the initial condition $d_0(c) = 0$
for all $c$. The rule is: ``assign the eligible color with
smallest deviation.''

More specifically, let the $t$th query be $(a_t,b_t)$.  The adversary
processes $a_t$ and then $b_t$.  Recall that $q_t(a)$ is the number
of queries involving element $a$ out of the first $t$ queries overall.
Element $e \in \{a_t,b_t\}$ is processed exactly as in the
proof of Theorem \ref{thm:adv_min_selection} except when $q_t(e) = w-1$. In that case, let $S_t(e)$ be
the set of
colors that are {\it not} currently assigned to neighbors of $e$; i.e., the
set of colors eligible to be assigned to element $e$. Let
$c* = \argmin_{c \in S_t(e)} d_{t-1}(c)$. The adversary assigns  color
$c*$ to $e$. Then the deviations of all colors are updated as follows:
\begin{enumerate}
\item if $c \not \in S_t(e)$ then $d_t(c) = d_{t-1}(c)$;
\item $d_t(c^*) \leftarrow d_{t-1}(c^*) + 1 - \frac{1}{|S_t(e)|}$;
\item For $c \in S_t(e) \setminus \{c^*\}$,
$d_t(c) \leftarrow d_{t-1}(c)-\frac{1}{\card{S_t(e)}}$.
\end{enumerate}

The function $d_t(c)$ has the following interpretation: over the history of the
color assignment process, certain steps occur where the adversary has the
choice of whether to assign color $c$ to some element; $d_t(c)$ represents
the number of times that color $c$ was chosen up to step $t$, minus the
expected number of times it would have been chosen if the same choices had
been available at all steps and the color had been chosen uniformly at random
from the set of eligible colors.

Because the smallest of the deviations of eligible colors is
augmented at each step,
it is not possible for any deviation to drift far from zero. Specifically,
it can be shown by induction on $t$ that at every step $t$ the sum of the
deviations is zero
and for $m = 1,2,\ldots,w$, the sum of the $m$ smallest deviations
is greater than or equal to $\frac{m(m-w)}{2}$.

Let $\deg_G(a)$ be the degree of $a$ in $G$ at the end of the process.
At the end of the process every element of degree greater than or equal to
$w-1$ in $G$ has been assigned to a chain. Each element of degree less than
$w-1$ has not been assigned to a chain, and is therefore called
{\it unassigned}. An unassigned element is called {\it eligible} for chain $c$
if it has not been compared (and found incomparable) with any element of
chain $c$. Let $s(c)$ be the length of chain $c$ and define $\defic(c)$, the
{\em deficiency} of chain $c$, as $\max(0,k-s(c))$. Define the
{\em total deficiency} $\DEF$ as the sum of the deficiencies of all chains.

Let $u$ be the number of unassigned elements.
Upon the termination of the process it must be possible to infer from the
results of the queries that every unassigned
element is of height at most $k-1$. This implies that, if unassigned element
$x$ is eligible for chain $c$, then the number of unassigned elements eligible
for chain $c$ must be at most $\defic(c)$. Thus the number of pairs $(a,c)$ such
that unassigned element $a$ is eligible for chain $c$ is $\DEF$. Define the
{\it deficiency} of unassigned element $a$  as $w - 1 - \deg_G(a)$.
Then  the sum of the deficiencies of the unassigned elements is bounded
above by $\DEF$, and therefore the sum of the degrees in $G$ of the
unassigned elements is at least $(w - 1)u - \DEF$.

By Theorem \ref{thm:lb-totalorder}, if $s(c) > k$, then
at least $\left(s(c) -k + \log{s(c) \choose k-1}\right)$
comparisons are needed to determine the $k$ smallest elements of chain $c$.

The total number of comparisons is the number of edges that have been placed
in $G$ in the course of the algorithm (i.e., the number of pairs that have
been declared incomparable by the adversary), plus the number of comparisons
required to perform $k$-selection in each chain.
The total number of pairs that have been declared incomparable is
$\frac{1}{2}\sum_a \deg_G(a)$.

Let $d(c)$ be the
deviation of color $c$ at the end of the process. Let $r(c)$ be the
number of steps in the course of the process at which the element being
colored was eligible to
receive color $c$.  If, at each such step, the color had been chosen uniformly
from the set of eligible colors, then the chance of choosing color $c$ would
have been at least $\frac{1}{w}$.
Thus, by the interpretation of the
function $d_t(c)$ given above, $s(c) \geq \frac{r(c)}{w} + d(c)$;
equivalently,
$r(c) \leq w(s(c) - d(c))$.  Also,
$\sum_{a|c(a) = c} \deg_G(a) \geq  n - r(c) \geq n -w(s(c) - d(c))$.
This sum is also at least $(w-1)s(c)$, since every element assigned to $c$
has been  declared
incomparable with at least $(w-1)$ other elements.

We can now combine these observations to obtain our lower bound.
For each chain $c$ define
$\cost(c) = \frac{1}{2} \sum_{a|c(a) = c} \deg_G(a) +
\max\left(0, s(c) - k + \log{s(c) \choose k-1}\right)$. Then
$\sum_c \cost(c) + \frac{1}{2}\left((w-1)u - \DEF\right) $  is a lower
bound on the total number of comparisons, and
\begin{align*}
\sum_c \cost(c) \geq \begin{aligned}[t]&
\frac {1}{2} \sum_c \max \left((w-1)s(c), n - w(s(c) - d(c))\right)\\
& + \sum_{c|s(c) > k} \left( s(c) - k +
\log{s(c) \choose k-1}\right).\end{aligned}
\end{align*}

To obtain our lower bound we shall minimize this function over all choices of
nonnegative integers  $s(c)$, $u$ and $\DEF$  such that $\sum_c s(c) + u = n$
and $\DEF = \sum_c \max(0, k - s(c))$. Noting that
$\sum_c \min(d(c), 0) \geq \min_m m(m-w)/2 = -w^2/8$, we obtain the following
lower bound on the total number of comparisons:
\begin{equation}
\frac{(w-1)n}{2} -\frac{\DEF}{2} - \frac{w^3}{8} + \frac{1}{2}\sum_c \max
\left(0, n - (2w -1)s(c)\right) + \sum_{c|s(c) > k} \left(s(c) - k + \log{s(c)
\choose k-1}\right)  \label{eq:minobj}
\end{equation}

We now restrict attention to the case $k \leq \frac{n}{2w-1}$.
Let $r=\frac{n}{2 \wid -1}$. We shall show that, at any global minimum
of \eqref{eq:minobj}, $\DEF = 0$. To see this, consider any
choice of $\{s(c)\}$ such that $\DEF > 0$.
Let $c$ be a chain such that $\defic(c) > 0$.
If $s(c)$ is increased by 1, then $\DEF$ decreases by 1, and the net change in the value of quantity \eqref{eq:minobj} is $1-\wid$, which is negative.

Thus, in minimizing \eqref{eq:minobj} we may assume that $\DEF = 0$, and hence
that
$\sum_c s(c) = n$.  So \eqref{eq:minobj} may be rewritten as
$$ \frac{(w-1)n}{2} - \frac{w^3}{8} + \sum_c F(s(c))$$

where
\begin{equation*}
F(s) = \begin{cases}
\frac{1}{2} \max(0, n-(2w-1)s) &\text{ if }
s \leq k\\
\frac{1}{2} \max (0, n - (2w -1)s)
+ \left(s - k + \log{s \choose k-1}\right) &\text{ if } s > k.
\end{cases}
\end{equation*}
Thus, we have the following minimization problem:
\begin{align*}
\text{Minimize } \sum_c F(s(c)), \text{ subject to } s(c) \geq 0 \text{ and }\sum_c s(c) =n.
\end{align*}

First, we note that $\sum_{c|k<s(c)}(s(c)-k) =n-wk$.
To determine the minimum we consider three ranges of values:  the low range
$s = k$, medium range $k < s(c) \leq r$, and
high range $r <s(c) \leq n$. Observing that $F(s)$  is strictly
concave in the medium range, and concave and
strictly increasing in the high range, it follows  that,
at the global minimum of \eqref{eq:minobj}, $s(c)$ is equal to
either $k$ or $r$ except for one value in the high range and possibly
one value  strictly within the medium range. The value in the high
range is at least $rw$,
since the sum of the values in the low
and medium ranges does not exceed $r(w-1)$. If $\sum_{c|k \leq s(c) \leq r} s(c)= (w-1)r - D$, then the unique
value of $s(c)$ in the high range is $rw+D$.  Moreover, exploiting the
concavity of $F(s)$ in the medium range, we claim that
$\sum_{c|k \leq s(c) \leq r} \log{s(c) \choose k-1}
\geq (w - 1 - \frac{D}{r-k}) \log {r \choose k-1}$.  This bound is at most $w\log k$ greater
than the sum $\sum_{c|k<s(c)\leq r}\log{s(c) \choose k-1}$.
Finally, a simple calculation shows that $\frac{1}{2}\sum_{c}\max(0, n- (2w-1)s(c)) = \frac{nD}{2r}$.

Thus we get the following lower bound on $\sum_c F(s(c))$:\\
$$n-w(k+\log k)+\min_{0 \leq D \leq (w-1)(r-k)}\left(
\left(w - 1 - \frac{D}{r-k}\right) \log {r \choose k-1} + \frac{nD}{2r}
+ \log{rw +D \choose k-1}\right).$$

 Since this is a concave function it is
minimized either at $D=0$ or $D = (w-1)(r-k)$.This yields the following lower
bound  on the worst-case number of comparisons required
to solve the $k$-selection problem when $k \leq r$:\\
$\frac{(w+1)n}{2} - w(k+\log k) -\frac{w^3}{8}+
\min\left((w-1) \log{r \choose k-1}+ \log {rw \choose k-1},
\frac{n(r-k)(w-1)}{2r} +
\log {n - (w-1)k \choose k-1}\right)$.
\end{proof}

\subsubsection{Lower bounds in the randomized query model} We now
prove lower bounds on the number of queries used by randomized $k$-selection algorithms.
We conjecture that the randomized
algorithm for finding the minimal elements which we give in the proof of Theorem~\ref{thm:randalg_min_selection}
essentially achieves the lower bound. However, the lower bound we prove here is a factor $2$
different from this upper bound.

We consider a distribution $D(n,\wid)$ on partial orders of width
$\wid$ over a set $P = \set{x_1,\ldots,x_n}$.  The distribution
$D(n,\wid)$ is defined as follows:\\[-16pt]
\begin{itemize}
\item The support of $D(n,\wid)$ is the set of partial orders
consisting of $\wid$ chains, where any two elements from different
chains are incomparable.\\[-16pt]
\item Each element belongs independently to one of the $\wid$ chains with equal
probability. \\[-16pt]
\item The linear order on each chain is chosen uniformly.\\[-8pt]
\end{itemize}

\begin{theorem} \label{thm:random_k_selection} The
expected query complexity of any algorithm solving the
$k$-selection problem is at least $$ \frac{\wid+3}{4} n -\wid k +
\wid \left( 1 - \exp\left(-\frac{n}{8\wid}\right)\right)
\left(\log \binom{n/(2\wid)}{k-1} \right). $$
\end{theorem}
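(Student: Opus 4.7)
The plan is to apply Yao's minimax principle against the distribution $D(n,\wid)$: it suffices to lower-bound the expected number of queries of any deterministic $k$-selection algorithm on a random poset drawn from $D(n,\wid)$. Throughout, I classify queries as \emph{intra-chain} (both elements lie in the same random chain) or \emph{inter-chain} (different chains); inter-chain queries return ``incomparable'' while intra-chain queries reveal the true ordering within the chain.

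For the intra-chain contribution, I fix a chain $c$ and condition on the random partition of $P$ into chains. Conditional on $c$ containing $m_c$ elements, the linear order on $c$ is uniformly random, and the algorithm's projection to $c$ must solve the classical problem of identifying the $k$ smallest of $m_c$ uniformly ordered elements. By the randomized version of Theorem~\ref{thm:lb-totalorder} (which follows from the comparison-tree leaf count stated after the theorem), the expected number of intra-chain queries on $c$ is at least $m_c - k + \log\binom{m_c}{k-1}$ whenever $m_c \geq k$. A Chernoff bound on the multinomial chain sizes yields $\pr[m_c \geq n/(2\wid)] \geq 1 - \exp(-n/(8\wid))$, so summing over the $\wid$ chains and using $\sum_c \E[m_c] = n$ gives the expected intra-chain query count as at least
\[
n - \wid k + \wid\Bigl(1 - \exp\bigl(-\tfrac{n}{8\wid}\bigr)\Bigr)\log\binom{n/(2\wid)}{k-1}.
\]

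For the inter-chain contribution, the target is $(\wid-1)n/4$. The essential observation is that the algorithm's output must certify, for each element $x$, whether $x$ belongs to the bottom $k$ of its own chain, and in particular it must rule out that $x$ has dominators hiding in any of the other $\wid - 1$ chains. I would argue, via a coupling that swaps $x$ with an element of a random distinct chain, that on average about $(\wid-1)/2$ inter-chain queries must touch $x$, for otherwise the algorithm could not give consistent correct outputs on the original input and its perturbation. Summing over the $n$ elements and dividing by $2$ (since each inter-chain query touches two elements) yields $(\wid-1)n/4$. Combining the two contributions gives precisely the term $\frac{(\wid+3)n}{4}-\wid k + \wid(1 - \exp(-n/(8\wid)))\log\binom{n/(2\wid)}{k-1}$ in the statement.

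The main obstacle will be making the inter-chain lower bound rigorous, since a clever algorithm might try to share information from a small number of inter-chain queries across many elements. A clean route is a posterior-entropy argument under $D(n,\wid)$: conditional on the query transcript, the posterior distribution over each element's chain must collapse enough that the algorithm can commit to a correct output, and the only event that reduces the posterior entropy of $x$'s chain assignment is a query (intra- or inter-chain) involving $x$; the intra-chain queries are already tallied in the previous step, so the residual uncertainty must be killed by inter-chain queries, delivering the required $(\wid-1)n/4$ after an averaging argument over $x$.
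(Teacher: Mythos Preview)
Your decomposition into intra-chain and inter-chain contributions, the use of $D(n,\wid)$, the Fussenegger--Gabow bound on each chain, and the Chernoff estimate for $|C_\alpha|\ge n/(2\wid)$ all match the paper's proof exactly, and your arithmetic assembling $(\wid+3)n/4$ is correct.

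The difference is in the inter-chain term, and here the paper's argument is considerably more direct than either of your proposed routes. Rather than a coupling or an entropy bookkeeping, the paper simply defines $Y_T(i)$ to be the number of queries involving $x_i$ that return ``incomparable'' \emph{before} the first query in which $x_i$ is found comparable to something, observes that $T \ge \tfrac12\sum_i Y_T(i) + \sum_\alpha Z_\alpha$ (the $\tfrac12$ because each incomparable query touches two elements), and then shows $\E[Y_T(i)] \ge (\wid-1)/2$ by conditioning on the chain assignments of all elements \emph{other than} $x_i$. Under that conditioning $x_i$'s chain is uniform over $\wid$ possibilities, so whatever sequence of partners the (now fixed) algorithm chooses for $x_i$, the expected number of misses before the first hit is at least $(\wid-1)/2$; this is just the expected position of a uniformly random target in a search through $\wid$ bins. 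No swapping, no posterior entropy, no worry about ``sharing information across elements'': the conditioning kills the dependence in one stroke.

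Your entropy sketch is the shakier of your two suggestions. As written it would naturally deliver about $\log \wid$ inter-chain queries per element (the initial entropy of the chain label), not $(\wid-1)/2$; to get the latter you would have to argue that a typical ``incomparable'' answer carries only $\log\frac{\wid}{\wid-1}\approx 1/\wid$ bits about $x$'s chain, and also that the algorithm genuinely must drive the posterior on $x$'s chain to near-zero entropy, which is not obvious since $k$-selection does not literally require identifying the chain. These are fixable, but the paper's $Y_T(i)$ argument gets the same $(\wid-1)n/4$ with none of this work.
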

\begin{proof}
In order to provide a lower bound on the number of queries, we
provide a lower bound on the number of queries of
incomparable elements and then use the classical bound to bound
the number of queries of comparable elements.

First we note that for each element $a$, the algorithm must make
either at least one query where $a$ is comparable to some other
element $b$, or at least $\wid-1$ queries where $a$ is
incomparable to all elements queried.  (The latter may suffice in
cases where $a$ is the unique element of a chain and it is
compared to all minimal elements of all other chains.)

We let $Y_t(i)$ denote the number of queries involving $x_i$ {\em
before} the first query for which the response is that $x_i$ is
comparable to an element. Also for each of the chains
$C_1,\ldots,C_{\wid}$ we denote by $Z_{\alpha}$ the number of
comparisons involving two elements from the same chain.

Letting $T$ denote the total number of queries before the
algorithm terminates, we obtain:

$$\E[T] \geq \sum_{i=1}^n \frac{1}{2}\E(Y_T(i)) +
\sum_{\alpha=1}^{\wid} \E[Z_{\alpha}].$$

\smallskip We claim that for all $1 \leq i \leq n$ we have $
\E[Y_T(i)] \geq \frac{\wid-1}{2}.$ This follows by conditioning on
the chains that all other elements but $x_i$ belong to. With
probability $1/\wid$, the first query will give a comparison; with
probability $1/\wid$, the second query, etc.

On the other hand, by the classical lower bounds we have for each
$1 \leq \alpha \leq \wid$ that\\
$$ Z_{\alpha} \geq |C_{\alpha}| - k + \log
\binom{|C_{\alpha}|}{k-1} $$

\medskip\noindent Taking expected value we obtain
$$ \E[Z_{\alpha}] \geq \frac{n}{\wid} - k + \E\left[\log \binom{|C_{\alpha}|}{k-1}\right].$$
A rough bound on the previous expression may be
obtained by using the fact that by standard Chernoff bounds,
except with probability $\exp(-\frac{n}{8\wid})$, it holds that
$C_{\alpha}$ is
  of size at least $n / (2 \wid)$.
Therefore
$$\E\left[\log \binom{|C_{\alpha}|}{k-1}
\right] \geq \left(1-\exp\left(-\frac{n}{8\wid}\right)\right) \log
\binom{n/(2\wid)}{k-1}.$$

\medskip\noindent Summing all of the expressions above, we
obtain
$$\frac{(\wid-1)n}{4} + \wid
\left(\frac{n}{\wid} - k\right) + \left(1 -
\exp\left(-\frac{n}{8\wid}\right)\right) \wid \log
\binom{n/(2\wid)}{k-1}$$
and simplifying gives the desired result.
\end{proof}

\section{Computing linear extensions and heights}\label{appx:karpsthing}
In this section we
consider two problems that are closely related to the problem of
determining a partial order:  given a poset, compute a linear extension, and compute
the heights of all elements.

A total order $(P, >)$ is a {\it linear extension} of a
partial order $(P, \dom)$ if, for any two elements $x$ and $y$,
$x \dom y$ implies $x > y$.  We give a
randomized algorithm that, given a set $P$ of $n$ elements and access to an
oracle for a poset $(P, \dom)$ of width at most $\wid$, computes a linear extension of $(P, \dom)$
with expected total complexity $O(n(\log n + w))$.  We give another
randomized algorithm that, on the same input, determines the
height of every element of $(P, \dom)$ with expected total complexity $O(n w \log n)$.

The algorithms are analogous to Quicksort, and are based on a {\it
ternary} search tree, an extension of the well-known binary search
tree for maintaining elements of a linear order. A ternary search
tree for $(P, \dom)$, consists of a root, a left subtree, a
middle subtree and a right subtree. The root contains an element
$x \in P$ and the left, middle, and right subtrees are ternary
search trees for the restrictions of $(P, \dom)$ to the sets
$\{y\,|\,x \dom y\}$, $\{y\,|\,x \incomp y\}$ and $\{y\,|\, y \dom x\}$,
respectively. The ternary search tree for the empty poset consists
of a single empty node. A randomized algorithm to construct a
ternary search tree for $(P, \dom)$ assigns a random element of
$P$ to the root, compares each of the $n-1$ other elements to the
element at the root to determine the sets associated with the
three children of the root, and then, recursively, constructs a
ternary search tree for each of these three sets.

Define the weight of an internal node $x$ of a ternary search tree
as the total number of internal nodes in its three subtrees, and
the weight of a ternary search tree as the sum of the weights of
all internal nodes. Then the number of queries required to
construct a ternary search tree is exactly the weight of the tree.

\begin{theorem}
The expected weight of a ternary search tree for any  poset of
size $n$ and width $w$ is $O(n(\log n + w))$.
\end{theorem}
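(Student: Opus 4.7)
The plan is to bound the expected depth of each element $x$ in the random ternary search tree by $O(\log n + \wid)$; summing over $x$ then yields the theorem, since the weight equals $\sum_{x \in P} \text{depth}(x)$ (each query contributes one to the depth of the non-pivot). For a fixed $x$, partition the remaining elements into $L_x = \set{y : y \dom x}$, $R_x = \set{y : x \dom y}$, and $M_x = \set{y : y \incomp x}$. For each $y \ne x$, let $S(x,y)$ denote the set of $z \in P \setminus \set{x,y}$ whose relation to $x$ differs from its relation to $y$---equivalently, those $z$ whose selection as a pivot would route $x$ and $y$ into different subtrees. Building the ternary search tree is equivalent to picking pivots in the order of a uniformly random permutation of $P$, so by the standard Quicksort-style argument, $y$ is an ancestor of $x$ iff $y$ is the first element of $\set{x,y} \cup S(x,y)$ to appear in this permutation, which occurs with probability $\frac{1}{|S(x,y)|+2}$. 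Hence $\E[\text{depth}(x)] = \sum_{y \ne x} \frac{1}{|S(x,y)|+2}$, and I would bound the three partial sums over $L_x$, $R_x$, and $M_x$ separately.

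For the sum over $L_x$, the key observation is that every $z \in L_x \setminus (L_y \cup \set{y})$ belongs to $S(x,y)$: here $z \dom x$ but $z \not\dom y$, so $z$'s relation to $x$ is ``above'' while its relation to $y$ is either ``incomparable'' or ``below''. Since $L_y \subseteq L_x$ by transitivity and $y \in L_x$, this gives $|S(x,y)| \geq |L_x| - |L_y| - 1$. Fix any linear extension $\pi$ of the sub-poset induced on $L_x$, with $\pi(z) < \pi(y)$ whenever $z \dom y$; every $z \in L_y$ then has $\pi(z) < \pi(y)$, so $|L_y| \leq \pi(y) - 1$. Therefore
\begin{align*}
\sum_{y \in L_x} \frac{1}{|S(x,y)|+2} \leq \sum_{y \in L_x} \frac{1}{|L_x| - \pi(y) + 2} = \sum_{r=1}^{|L_x|} \frac{1}{|L_x|-r+2} = O(\log n),
\end{align*}
and a symmetric argument yields the same bound for the sum over $R_x$.

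For the sum over $M_x$, fix a minimum chain decomposition $\set{C_1, \ldots, C_\wid}$ of $\po$. For $y \in M_x$ with $y \in C_j$, every $z \in (C_j \cap M_x) \setminus \set{y}$ lies in $S(x,y)$: $z$ and $y$ are comparable (same chain) while $z \incomp x$. Hence $|S(x,y)| \geq |C_j \cap M_x| - 1$, and
\begin{align*}
\sum_{y \in M_x} \frac{1}{|S(x,y)|+2} \leq \sum_{j=1}^{\wid} \sum_{y \in C_j \cap M_x} \frac{1}{|C_j \cap M_x|+1} = \sum_{j=1}^{\wid} \frac{|C_j \cap M_x|}{|C_j \cap M_x|+1} \leq \wid.
\end{align*}
Combining the three estimates gives $\E[\text{depth}(x)] = O(\log n + \wid)$, and summing over $x$ yields the stated bound. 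The main obstacle---and what prevents a direct reduction to the classical Quicksort analysis---is getting the $O(\log n)$ bound for the $L_x$ contribution: naively lower-bounding $|S(x,y)|$ by the number of elements on a chain between $x$ and $y$ yields only $O(\wid \log n)$. The trick is to count \emph{all} elements of $L_x$ that do not dominate $y$, and to combine this with a linear extension of $L_x$ to collapse the sum into the standard Quicksort harmonic series.
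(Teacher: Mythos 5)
Your proof is correct, and it takes a genuinely different route from the paper's proof sketch. The paper also decomposes the random root-to-$x$ path by edge type (middle edges, corresponding to your $M_x$ ancestors, and left/right edges, corresponding to $L_x \cup R_x$), but handles each part differently: it bounds the number of middle edges deterministically by $w$ (the middle pivots on the path, together with $x$, form an antichain), and bounds the expected number of left/right edges by $O(\log n)$ via a ``balanced split'' recursion argument, asserting that at each step the left and right subtree sizes are within a factor of $3$ of each other with probability at least $1/2$. You instead deploy the classical separator-set analysis of Quicksort, writing $\E[\mathrm{depth}(x)] = \sum_{y\neq x}\frac{1}{|S(x,y)|+2}$ and exhibiting a large separator set for each $y$. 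The key move---threading a linear extension of $L_x$ through the bound $|S(x,y)|\geq |L_x|-|L_y|-1$ so that the sum over $y\in L_x$ collapses to a single harmonic series, rather than a chain-by-chain harmonic sum which would only give $O(w\log n)$---is a nice observation not present in the paper and is exactly what makes the separator-set route work here. Your $M_x$ bound via a Dilworth decomposition is an expectation bound of $\leq w$ where the paper's is deterministic, but both suffice. Overall your argument is more explicit and self-contained than the paper's sketch, whose ``balanced split with probability $1/2$'' claim would require additional care to make fully rigorous in the presence of a large middle subtree.
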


\begin{proof}[Proof Sketch] Consider the path from the root to a given
element $x$.
The number of edges in this path from a parent to a middle subtree is at most $w$, and the expected number of edges from a parent to a left or right subtree is $O(\log n)$ since, at every step along the path, the probability is at least $1/2$ that the sizes of the left and right subtrees differ by at most a factor of $3$. It follows that the expected contribution of any element to the weight of the ternary search tree is $w + O(\log n)$. \end{proof}

Once a ternary search tree for a poset has been constructed, a
linear extension can be constructed by a single depth-first
traversal of the tree. If $x$ is the element at the root, then the
linear extension is the concatenation of the linear extensions of
the following four subsets, corresponding to the node and its
three subtrees:  $\{y\,|\,x \dom y\}$, $\{x\}$,
$\{y\,|\,x \incomp y\}$ and
$\{y\,|\,y \dom x\}$.  The corollary below follows.

\begin{corollary}
There is a randomized algorithm of expected total complexity
$O(n(\log n + w))$ for computing a linear extension of a poset.
\end{corollary}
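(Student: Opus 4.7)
The plan is to assemble a linear extension by a single depth-first traversal of the ternary search tree whose expected weight was just bounded, and then verify that the total expected work is dominated by the cost of building the tree.

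First, I would build a randomized ternary search tree for the poset using the algorithm described above the corollary. By the preceding theorem, the expected number of oracle queries used to build this tree is $O(n(\log n + w))$. Note also that building the tree requires only $O(n(\log n + w))$ non-query operations in expectation, since each query is paired with a constant number of data-structure manipulations (placing an element into the left, middle, or right subtree of the current root).

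Next, I would recursively produce a linear extension from the tree using the rule already written in the paragraph preceding the corollary: at a node with element $x$, concatenate (i) the linear extension of the left subtree $\{y : x \succ y\}$, (ii) the singleton $x$, (iii) the linear extension of the middle subtree $\{y : x \incomp y\}$, and (iv) the linear extension of the right subtree $\{y : y \succ x\}$. Correctness is by induction on tree size. The base case is trivial. For the inductive step, I need to show that whenever $a \succ b$ in $\po$, $a$ precedes $b$ in the output. Let $v$ be the lowest common ancestor of $a$ and $b$ in the tree and let $x$ be the element stored at $v$. Either $a = x$ or $b = x$ or $a$ and $b$ lie in different subtrees of $v$; a short case analysis (using transitivity of $\succ$ and that the rule puts the left subtree before $x$, $x$ before the middle subtree, and the middle subtree before the right subtree) shows that the relation $a \succ b$ forces $a$ to appear before $b$ in the output. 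If $a$ and $b$ lie in the same subtree of $v$, then they lie in the same subtree, and the inductive hypothesis applies.

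Finally, I would bound the traversal cost: the tree has exactly $n$ internal nodes, so the depth-first traversal and the concatenation of the lists can be implemented in $O(n)$ time (for instance, by appending to a doubly linked list in place). Adding this $O(n)$ post-processing to the $O(n(\log n + w))$ expected cost of constructing the tree yields the stated expected total complexity. The main obstacle, if any, is merely the correctness case analysis at the lowest common ancestor, which I expect to be short since the tree's construction directly mirrors the three mutually exclusive relations $\succ$, $\incomp$, $\prec$.
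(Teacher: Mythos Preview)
Your proposal is correct and follows exactly the paper's approach: build the randomized ternary search tree (expected cost $O(n(\log n + w))$ by the preceding theorem) and then read off a linear extension via the stated depth-first concatenation in $O(n)$ time. One tiny slip: with the concatenation order you (and the paper) give, dominated elements come first, so if $a \succ b$ then $b$ precedes $a$ in the output list rather than the other way around; your lowest-common-ancestor case analysis goes through unchanged once this direction is corrected.
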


Let $h(x) = h$ be the height of element $x$ in $(P, \dom)$. Given a linear extension
$x_n > \cdots > x_2 > x_1$, it is easy to compute $h(x)$ for
each element $x$ by binary search, using the following observation: Let
$S(i,h)=\{x_j\,|\,j \leq i, h(x_j) = h \}$ be the set of elements of index at most $i$
in the linear extension and of height $h$ in $(P, \dom)$. Then
$|S(i,h)| \leq w$ (as the elements of $S(i,h)$ are pairwise incomparable),
and $h(x_{i+1}) > h$ if and only if there exists $x \in
S(i,h)$ such that $x_{i+1} \dom x$. Thus, given the sets $S(i,h)$,
for all $h$, we can determine $h(x_{i+1})$ and the sets $S(i+1,h)$,
for all $h$, in time $O(w \log i)$ using binary search. This
yields:

\begin{corollary}
Given a linear extension, there is a deterministic algorithm with
total complexity $O(wn \log n )$ to compute  the heights of all
elements of a partial order of size $n$ and width $w$. Combining
this algorithm with the above algorithm for computing a linear
extension, there is a randomized algorithm to determine the
heights of all elements with expected total complexity $O(wn \log n)$.
\end{corollary}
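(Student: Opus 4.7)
The plan is to formalize the sketch already given in the paragraph preceding the corollary. Process the elements of the linear extension in increasing order $x_1, x_2, \ldots, x_n$, maintaining, for each nonnegative integer $h$, the set $S(i,h) = \{x_j : j \le i,\ h(x_j) = h\}$ as an explicit list. The element $x_1$ has height $0$, so we initialize $S(1,0) = \{x_1\}$ and $S(1,h) = \emptyset$ for $h > 0$. The inductive step is: given all $S(i,h)$, determine $h(x_{i+1})$ and update the sets to obtain $S(i+1,h)$ for every $h$.

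For the inductive step I would use binary search on $h$. The key monotonicity claim to verify is: for the element $x_{i+1}$ with true height $h^* = h(x_{i+1})$, the predicate $\Pi_i(h) := \bigl[\exists\, x \in S(i,h)\ \text{with}\ x_{i+1} \dom x\bigr]$ is true exactly for $h \in \{0,1,\ldots,h^*-1\}$ and false for $h \ge h^*$. One direction is immediate from the definition of height: any maximum chain of length $h^*$ dominated by $x_{i+1}$ consists of elements of heights $0,1,\ldots,h^*-1$, and these elements lie in $\{x_1,\ldots,x_i\}$ since the ordering is a linear extension, which gives $\Pi_i(h)$ for $h < h^*$. For the other direction, if some $x \in S(i,h)$ satisfied $x_{i+1} \dom x$ with $h \ge h^*$, then prepending $x_{i+1}$ to a maximum chain below $x$ would give a chain of length $h + 1 > h^*$ below $x_{i+1}$, contradicting $h(x_{i+1}) = h^*$. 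Hence $\Pi_i$ is a monotone boolean predicate on $h$ and standard binary search over the range $\{0,1,\ldots,i\}$ locates $h^*$ in $O(\log i)$ steps. Each step evaluates $\Pi_i(h)$ by comparing $x_{i+1}$ to each of the at most $w$ elements of $S(i,h)$, costing $O(w)$ queries and operations. Once $h^*$ is found, we set $S(i+1,h^*) := S(i,h^*) \cup \{x_{i+1}\}$ and leave the other sets unchanged.

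Summing the per-element cost gives total complexity $\sum_{i=1}^{n-1} O(w \log i) = O(wn \log n)$ for the deterministic height computation, which is the first assertion. For the second assertion, chain this deterministic procedure after the randomized Quicksort-style algorithm of the preceding corollary, which produces a linear extension in expected total complexity $O(n(\log n + w))$. The two costs add to an expected $O(wn \log n)$ bound, completing the proof.

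The main obstacle, though a small one, is establishing the monotonicity of $\Pi_i$ cleanly, since the binary search would fail if the predicate could flip back and forth; the argument above relies on the observation that chains witnessing height lie entirely to the left of $x_{i+1}$ in the linear extension, which is precisely what the linear extension hypothesis buys us.
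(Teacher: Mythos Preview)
Your proposal is correct and follows exactly the approach the paper sketches in the paragraph preceding the corollary: maintain the sets $S(i,h)$, binary-search on $h$ using the predicate ``$x_{i+1}$ dominates some member of $S(i,h)$'' (whose equivalence with $h(x_{i+1})>h$ is precisely the monotonicity you verify), and pay $O(w)$ per predicate evaluation. Your justification that a maximum chain below $x_{i+1}$ hits every height $0,\ldots,h^*-1$ is the one nontrivial detail the paper leaves implicit, and your argument for it is sound.
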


\section{Variants of the poset model}
\label{sec:posetvariants}
In this section, we discuss sorting in two variants of the poset model that occur when different
restrictions are relaxed.  First, we consider posets for which a bound
on the width is not known in advance.  Second, we allow the
irreflexivity condition to be relaxed, which leads to transitive relations.
We show that with relatively little overhead in complexity, sorting in either case
reduces to the problem of sorting posets.

\subsection{Unknown width}
\label{sec:unknownwidth}
Recall from Section~\ref{sec:sorting} that $\nposets$ is the number of posets of width at most $\wid$ on $n$ elements.

\begin{claim}
Given a set $P$ of $n$ elements and access to an oracle for poset $\po = (P,\dom )$
of unknown width $\wid$, there is an algorithm that sorts $P$ using
at most $2\log \wid\left(\log \nposetsAlt + 4\wid n\right) = \Theta(n\log \wid \left(\log n+\wid \right))$ queries, and there is an efficient algorithm that sorts $P$
using at most $8n\wid\log \wid \log (n/ (2\wid))$ queries with total complexity $O(n \wid^2 \log \wid \log (n/ \wid))$.
\end{claim}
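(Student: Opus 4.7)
The plan is to apply the standard \emph{doubling trick} to the known-width sorting algorithms of Sections~\ref{sec:optimal} and~\ref{sec:efficient}. For $i = 1, 2, \ldots$, we run the appropriate algorithm with width guess $w_i = 2^i$, stopping at the first invocation that terminates successfully. Let $i^\star = \lceil \log \wid \rceil$ denote the smallest $i$ with $w_i \geq \wid$; then $w_{i^\star} \leq 2\wid$, and by the correctness of the known-width algorithms iteration $i^\star$ is guaranteed to succeed.

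The technical crux is that a wrong guess must be detected cheaply. For \textsc{EntropySort}, the width bound is used substantively only in Step~4c, which needs a chain decomposition of the already-constructed poset $\po'$ into at most $w_i$ chains. By Dilworth's theorem, no such decomposition exists iff $\wid(\po') > w_i$, which immediately certifies $\wid(\po) > w_i$; this check consumes no oracle queries. For \textsc{Poset-Mergesort}, failure manifests inside \peeling: a peeling iteration starting with more than $w_i$ chains must find a comparable pair of top elements; if the set of top elements is an antichain of size greater than $w_i$, this again directly certifies $\wid(\po) > w_i$. Since every \peeling\ operation after the initial \chainmerge\ build is a lookup in that data structure, aborting incurs no queries beyond those already charged to iteration $i$.

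Given cheap failure detection, the total cost is $\sum_{i=1}^{i^\star} \text{cost}(w_i)$, where $\text{cost}(\cdot)$ is the worst-case bound of Theorem~\ref{th:entropysortanalysis} or Theorem~\ref{thm:mergesort}. Bounding the sum crudely by $i^\star$ times its last term and plugging in $w_{i^\star}\leq 2\wid$ yields $2\log\wid\,(\log\nposetsAlt + 4\wid n)$ queries for the \textsc{EntropySort}-based scheme. The analogous calculation for \textsc{Poset-Mergesort}, using $2 w_i n \log(n/w_i)$ queries and $O(w_i^2 n \log(n/w_i))$ total work per iteration, gives $8\wid n\log\wid\log(n/(2\wid))$ queries and total complexity $O(\wid^2 n \log\wid \log(n/\wid))$.

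The main obstacle is the failure-detection argument in the paragraph above; everything else is a routine geometric-style summation whose only novelty is the multiplicative $\log\wid$ factor contributed by the number of doublings. The $\Theta(n\log\wid(\log n + \wid))$ estimate in the statement is then read off from the asymptotics $\log \nposetsAlt = \Theta(n\log n + \wid n)$ established after Theorem~\ref{thm:info_sorting_lower_bound}.
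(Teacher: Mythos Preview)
Your doubling approach is exactly the paper's: run the relevant algorithm with width guess $2^i$ for $i=1,2,\ldots$, abort on failure, and charge at most $\lceil\log\wid\rceil$ runs, each bounded by the cost with parameter $2\wid$. Your treatment of failure in \textsc{Poset-Mergesort}/\peeling\ is correct and in fact more detailed than the paper's one-line justification.

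One inaccuracy worth flagging in the \textsc{EntropySort} case: the width parameter is \emph{not} used only in Step~4c. It also enters Step~4d$'$ii, where the weights $\mathcal{D}_{ij}$ count width-$w_i$ extensions. If the guess $w_i$ is too small, the oracle may drive the weighted binary search toward an index $j^*$ with $\mathcal{D}_{ij^*}=0$; that interval has length zero, and the search of Step~4d$'$v, as written, never terminates (so the failure cannot manifest at the next Step~4c). The paper's proof glosses over this too, merely saying the alternate version ``returns \textsc{fail} if it cannot insert an element.'' The standard patch---interleave an ordinary (unweighted) binary search, which locates $j^*$ in $O(\log n)$ queries and lets you test $\mathcal{D}_{ij^*}=0$---adds only $O(\log n)$ queries to the single failing insertion, and the telescoping argument of Lemma~\ref{lem:query complexity of entropy sort step} still bounds everything prior since $Z_k\geq 1$ whenever $\po_k$ has width at most $w_i$. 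So the stated bound survives, but the failure is genuinely inside Step~4d$'$, not at Step~4c.
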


\begin{proof}
We use an alternate version of {\sc EntropySort} that returns {\sc fail} if it cannot insert an element (while maintaining a decomposition of the given width) and an alternate version of {\sc Poset-Mergesort} that returns {\sc fail} if the {\sc Peeling} algorithm cannot reduce the size of the decomposition to the given width.
The first algorithm of the claim is, for $i=1,2,\ldots$, to run the alternate version of algorithm {\sc EntropySort} on input set $P$, the oracle, and width upper bound $2^i$, until the algorithm returns without failing.  The second algorithm is analogous but uses the alternate version of {\sc Poset-Mergesort}.  The claim follows from Theorems~\ref{th:entropysortanalysis} and~\ref{thm:mergesort}, and from the fact that we reach an upper bound of at most $2\wid$ on the width of $\po$ in $\log \wid$ rounds.
\end{proof}

\subsection{Transitive relations}
\label{sec:transrelns}

A partial order is a particular kind of transitive relation.  In fact,
our results generalize to the case of arbitrary transitive relations
(which are not necessarily irreflexive) and are therefore relevant to a broader set of applications.
Formally, a transitive relation
is a pair $(P, \domtr)$, where $P$ is a set of elements and $\domtr \subseteq P \times P$
is transitive.  The \emph{width} of a transitive relation is defined to be the maximum
size of a set of mutually incomparable elements.
We say that a poset $(P, \dom)$ is \emph{induced} by a transitive relation
$(P, \domtr)$ if $\dom \subseteq \domtr$.  A poset $(P, \dom)$ is \emph{minimally
induced} by $(P, \domtr)$ if for any relation $(x,y)\in \domtr \setminus \dom$,
the pair $(P, \dom \cup\, (x,y))$ is not a valid partial order, i.e. its corresponding graph contains a directed cycle.

We require the following lemma, bounding the width of a minimally induced poset.

\begin{lemma}
Let $(P, \dom)$ be a poset minimally induced by the transitive relation $(P, \domtr)$.
Then the width of $(P, \dom)$ is equal to the width of $(P, \domtr)$.
\end{lemma}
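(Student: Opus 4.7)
The plan is to prove equality of the two widths by a two-sided inclusion on antichains, showing in fact that a set $A \subseteq P$ is an antichain with respect to $\dom$ if and only if it is an antichain with respect to $\domtr$.

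First I would handle the easy direction. Since $\dom \subseteq \domtr$, any two elements that are incomparable under $\domtr$ (meaning neither ordered pair lies in $\domtr$) are automatically incomparable under $\dom$. Hence every antichain of $(P, \domtr)$ is an antichain of $(P, \dom)$, which yields $\wid(P, \domtr) \leq \wid(P, \dom)$.

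The main content is the reverse inclusion, and this is where the minimality hypothesis plays its role. Let $A$ be an antichain in $(P, \dom)$, and pick any distinct $a,b \in A$. Suppose for contradiction that $(a,b) \in \domtr$. Since $a \incomp b$ in $\dom$, we have $(a,b) \in \domtr \setminus \dom$, so by the minimality hypothesis the directed graph of $\dom \cup \{(a,b)\}$ contains a directed cycle. Because $(P, \dom)$ is itself a poset, its graph is acyclic, so this cycle must traverse the new edge $(a,b)$; removing that edge leaves a directed path in $\dom$ from $b$ to $a$. Transitivity of $\dom$ then gives $b \dom a$, contradicting $a \incomp b$ in $\dom$. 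By symmetry $(b,a) \notin \domtr$ either, so $a$ and $b$ are incomparable in $\domtr$. Thus $A$ is an antichain of $(P, \domtr)$, giving $\wid(P, \dom) \leq \wid(P, \domtr)$.

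The two inequalities together yield the claim. The only potential subtlety is making sure the notion of ``cycle'' in the minimality definition is interpreted correctly: we use the graph of $\dom \cup \{(a,b)\}$ directly (not its transitive closure), which is exactly what the excerpt specifies, and then invoke transitivity of $\dom$ separately to collapse the resulting $b$-to-$a$ path into the single relation $b \dom a$. Beyond that careful reading, there is no real obstacle—the argument is essentially a one-line deduction from the minimality hypothesis.
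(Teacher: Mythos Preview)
Your proof is correct and takes essentially the same approach as the paper: both arguments show that if $a \incomp b$ in $\dom$ but $(a,b) \in \domtr$, then minimality and the acyclicity of $\dom$ force $b \dom a$, a contradiction. The only cosmetic difference is that the paper phrases the contradiction as ``adding $(a,b)$ produces a valid partial order, contradicting minimality,'' whereas you invoke minimality first to obtain a cycle and then contradict $a \incomp b$; these are the same argument read in opposite directions.
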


\begin{proof}
Suppose otherwise, that is, suppose that there is a pair of distinct elements $x, y\in P$ such that
$x \incomp y$ with respect to the partial order $(P, \dom)$, but $x$ and $y$ have some
relation in $(P, \domtr)$.  Without loss of generality, suppose that $x \domtr y$; it may
be simultaneously true that $y \domtr x$.
First, we note that $(P, \dom \cup\, (x,y))$ is a valid partial order; if it were not, i.e.
if the addition of $(x,y)$ introduced a cycle, then
it would be the case that $y \dom x$, which is a contradiction to their incomparability.
However, the poset $(P, \dom \cup\, (x,y))$ is also induced by $(P, \domtr)$,
which contradicts the assumption that $(P, \dom)$ is minimally induced.
\end{proof}

We denote by $\ora_\dom$ an oracle for a poset $(P,\dom)$ and by
$\ora_\domtr$ an oracle for a transitive relation $(P,\domtr)$.
In the following claim, we assume that the poset sorting algorithm outputs a chain decomposition (such
as a $\chainmerge$); if it does not, the total complexity of the algorithm for sorting a transitive relation increases a bit, but not its query complexity.

\begin{claim}
Suppose there is an algorithm $\A$ that, given a set $P$ of $n$ elements, access to an
oracle $\ora_\dom$ for a poset $\po=(P, \dom)$, and an upper bound of $\wid$ on the width of $\po$,
sorts $P$ using $f(n, \wid)$ queries and $g(n,\wid)$ total complexity.
Then there is an algorithm $\B$ that, given $P$, $\wid$, and access to an oracle $\ora_\domtr$ for a transitive
relation $(P, \domtr)$ of width at most $\wid$, sorts $P$ using $f(n,w) + 2n\wid$ queries and $g(n,w) + O(n\wid)$ total complexity.
\end{claim}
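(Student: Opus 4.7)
The plan is to reduce $\B$ to a single call of $\A$ on an auxiliary partial order, followed by a $\chainmerge$-building postprocessing step that completes the representation of $\domtr$. By the preceding lemma, any $(P, \domtr)$ of width $\wid$ admits a minimally induced poset $(P, \dom)$ of the same width, which is a valid input for $\A$.

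Concretely, I would fix an indexing $x_1, \ldots, x_n$ of $P$ and take the minimally induced poset $\dom$ defined by $x_i \dom x_j$ if and only if $x_i \domtr x_j$ and either $x_j \not\domtr x_i$ or $i < j$. A routine case analysis verifies irreflexivity, antisymmetry, and transitivity---the one subtle step handles a three-way $\domtr$-cycle, where the tie-break conditions force the indices to be strictly increasing along $x_i, x_j, x_k$---and $\dom$-antichains turn out to coincide with $\domtr$-antichains, so the width of $\dom$ equals $\wid$. Given this $\dom$, simulate $\ora_\dom$ by calls to $\ora_\domtr$: a query on $\{x_i, x_j\}$ with $i<j$ is resolved by first calling $\ora_\domtr(x_i, x_j)$ (which, by the tie-break, immediately yields $x_i \dom x_j$ if true) and only when that call returns false by a second call $\ora_\domtr(x_j, x_i)$ to distinguish $x_j \dom x_i$ from $x_i \incomp x_j$. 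Results are cached so that each unordered pair is queried at most twice in total.

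Running $\A$ on the simulated oracle produces a chain decomposition $\C = \{C_1, \ldots, C_\wid\}$ of $(P, \dom)$ using $f(n,\wid)$ calls to $\ora_\dom$ and $g(n,\wid)$ time; since $\dom \subseteq \domtr$, each $\dom$-chain is also a $\domtr$-chain, so $\C$ is simultaneously a decomposition of $(P, \domtr)$ into at most $\wid$ chains. To recover $\domtr$ in full I would then build a variant of $\chainmerge$ suited to transitive relations: for every element $x \in C_i$ and every chain $C_j$ with $j \neq i$, store both the largest $y \in C_j$ with $x \domtr y$ and the smallest $y' \in C_j$ with $y' \domtr x$. Transitivity of $\domtr$ ensures each such $\domtr$-neighborhood is a contiguous suffix (respectively prefix) of the chain, so the same parallel-scan construction as in Claim~\ref{thm:chainmerge} applies and costs $2n\wid$ queries and $O(n\wid)$ total time; the result supports constant-time lookup of either direction of $\domtr$.

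The main delicate point is arranging the accounting so that the simulation contributes only $f(n,\wid)$ queries to $\ora_\domtr$, with the remaining $2n\wid$ queries coming from the $\chainmerge$ scans rather than from doubling the simulation cost. This is done by deferring the ``second-direction'' lookups in the simulation: whenever the simulation would need $\ora_\domtr(x_j, x_i)$ merely to rule out $x_j \dom x_i$, we answer $\A$ with the minimal information it actually requires at that moment (namely, the asymmetric fact that $x_i \ndom x_j$), and the exact value of $\ora_\domtr(x_j, x_i)$ is instead recorded during the subsequent $\chainmerge$ scans, where the same pair would be queried anyway. Verifying that the information deferred in this way never blocks $\A$'s progress---using the fact that $\A$'s queries can be made essentially directional once the tie-break rule is fixed---is the piece of bookkeeping that makes the totals come out to $f(n,\wid) + 2n\wid$ queries and $g(n,\wid) + O(n\wid)$ total operations, as claimed.
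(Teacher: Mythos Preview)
Your construction of a specific minimally induced poset via the index tie-break is correct and is a clean alternative to the paper's approach, which instead simulates an \emph{adaptive} oracle $\ora$ that, upon seeing both $x\domtr y$ and $y\domtr x$, returns whichever direction is consistent with its earlier answers. Both routes produce a minimally induced poset of width $\wid$ on which $\A$ can be run, and both finish with the same $\chainmerge$-style pass; your tie-break avoids the need for the simulated oracle to maintain a transitive closure of past responses, which is a genuine simplification.

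The problem is your last paragraph. The deferral argument does not work as stated: $\A$ is a black box that expects a three-valued answer to each query, and you cannot feed it the one-sided fact ``$x_i\ndom x_j$'' without also committing to whether $x_j\dom x_i$ or $x_i\incomp x_j$. Any wrong guess can send $\A$ down an inconsistent branch, and there is no mechanism to ``correct'' this during the later $\chainmerge$ scans (which in any case touch only $O(n\wid)$ pairs, not every pair $\A$ might have asked about). The claim that ``$\A$'s queries can be made essentially directional'' is an assumption about the internals of $\A$ that the reduction is not entitled to make.

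The resolution is that no deferral is needed. In the paper's oracle model a single query on the pair $\{x,y\}$ returns the full relation between $x$ and $y$; for a transitive relation this is one of ``$x\domtr y$ only'', ``$y\domtr x$ only'', ``both'', or ``incomparable''. Hence one call to $\ora_\domtr$ already tells you which of $x_i\dom x_j$, $x_j\dom x_i$, or $x_i\incomp x_j$ holds under your tie-break, and the simulation of $\ora_\dom$ costs exactly $f(n,\wid)$ queries. With that reading, your first three paragraphs already give the full proof and the fourth should be dropped.
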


\begin{proof}
Given an oracle $\ora_{\domtr}$ for the transitive relation $(P, \domtr)$,
we define a special poset oracle $\ora$ that runs as follows:
Given a query $q(x,y)$, the oracle $\ora$ first checks if the relation between $x$ and $y$ can be inferred by transitivity and irreflexivity from previous responses.  If so, it outputs the appropriate inferred response; otherwise, it forwards the query to the oracle $\ora_{\domtr}$.  The oracle $\ora$ outputs the response of $\ora_{\domtr}$ except if both $x \domtr y$
and $y \domtr x$; in this case, $\ora$ outputs whichever relation is consistent with the partial order determined by previous responses (if both relations are consistent, then it arbitrarily outputs one of the two).  By definition, the responses of $\ora$ are consistent
with a partial order induced by $(P, \domtr)$.

The first step of algorithm $\B$ is to run algorithm $\A$ on input $P$ and $\wid$,
giving $\A$ access to the special oracle $\ora$, which $\B$ simulates using its access to $\ora_{\domtr}$.
Since $\A$ completely sorts its input, it reconstructs a poset induced
by $(P, \domtr)$ via $\ora$ that has a maximal set
of relations.  That is, there is a poset $\po=(P,\dom)$
minimally induced by $(P,\domtr)$ such that the
responses of $\ora$ to the sequence of queries made by $\A$ are
indistinguishable from the responses of $\ora_{\dom}$ to the same
sequence of queries.  Since $\po$ has the same width as $(P, \domtr)$,
it is valid to give $\A$ the upper bound of $\wid$.  Hence, $\A$ sorts $\po$
and outputs some chain decomposition $\C =\set{C_1, \ldots C_q}$ of $\po$ such that $q \leq \wid$.

The second step of algorithm $\B$ is to make a sequence of queries to the oracle $\ora_{\domtr}$
to recover the relations in $\domtr\setminus\dom$.  It
is similar to building a $\chainmerge$ data structure:
for all $i, j$, $1 \leq i, j \leq q$, for every element $x \in C_i$, we store the index of $x$ in chain $C_i$
and the index of the largest element $y \in C_j$ such that $x \domtr y$.  An analysis similar to the one for $\chainmerge$ (see Section~\ref{sec:chainmerge}) shows that it takes at most $2nq$ queries to the oracle $\ora_{\domtr}$ and $O(nq)$ total complexity to find all the indices.  The relation in $(P, \domtr)$ between any pair of elements can then be looked up in constant time.
\end{proof}


\begin{thebibliography}{1}

\bibitem{Brightwell} G.~Brightwell. ``Balanced Pairs in Partial Orders,''
{\em Discrete Mathematics} 201(1--3): 25--52, 1999.

\bibitem{Brightwell&Goodall:SortingLowerBound} G.~Brightwell and S.~Goodall. ``The Number
of Partial Orders of Fixed Width,'' {\em Order} 20(4): 333--345,
2003.

\bibitem{Brightwell&Winkler} G.~Brightwell and P.~Winkler. ``Counting Linear Extensions is
\#P-Complete,'' {\em STOC} 1991.

\bibitem{coverthomas} T.~M.~Cover and J.~A.~Thomas. {\em Elements of information
theory.} New York: John Wiley \& Sons Inc., 1991.

\bibitem{Faigle&Turan} U.~Faigle and Gy.~Tur\'{a}n. ``Sorting and Recognition Problems for Ordered Sets,''
{\em SIAM J. Comput.} 17(1): 100--113, 1988.

\bibitem{FordFulkerson} L.~R.,~Jr.,~Ford and D.~R.~Fulkerson. {\em Flows in Networks.} Princeton University Press, 1962.

\bibitem{Fredman} M.~Fredman. ``How good is the information theory bound in sorting?''
{\em Theor. Comput. Sci.} 1(4): 355-–361, 1976.

\bibitem{Fuss&Ga} F.~Fussenegger and H.~N.~Gabow. ``A Counting Approach to Lower
Bounds for Selection Problems,'' {\em Journal of the ACM} 26(2): 227--238, 1979.

\bibitem{Kahn&Kim} J.~Kahn and J.~H.~Kim. ``Entropy and Sorting,''
{\em STOC}, 178 -- 187, 1992.

\bibitem{Kahn&Saks} J.~Kahn and M.~Saks. ``Balancing poset extensions,''{\em Order} 1(2): 113--126,1984.

\bibitem{kislitsyn} S.~S.~Kislitsyn, ``A finite partially ordered set and its corresponding set of permutations,'' {\em Matematicheskie Zametki} 4(5): 511--518, 1968.

\bibitem{Knuth} D.~Knuth. {\em The Art of Computer Programming: Sorting and Searching}, Massachusetts: Addison-Wesley, 1998.

\bibitem{Linial} N.~Linial. ``The Information theoretic bound is good for merging,''
{\em SIAM J. Comput. SICOMP} 13(4): 795-–801, 1984.

\bibitem{Newman03} M. E. J. Newman, {\em SIAM Review} 45: 167--256, 2003.

\bibitem{OnakParys} K.~Onak and P.~Parys. Generalization of Binary Search:
Searching in Trees and Forest-Like Partial Orders. {\em FOCS} 2006.

\bibitem{trotter&felsner} W.~Trotter and S.~Felsner, ``Balancing pairs in partially ordered sets,'' {\em Combinatorics, Paul Erdos is Eighty} I: 145--157, 1993.

\end{thebibliography}
\end{document}